\numberwithin{equation}{section}
\newtheorem{theorem}{Theorem}
\newtheorem{lemma}{Lemma}
\newtheorem{remark}{Remark}
\newtheorem{assumption}{Assumptions}
\begin{document}

\title{A lower bound for the BCS functional with boundary conditions at infinity}

\author{Andreas Deuchert}

\date{}

\maketitle

\begin{abstract} 
We consider a many-body system of fermionic atoms interacting via a local pair potential and subject to an external potential within the framework of BCS theory. We measure the free energy of the whole sample with respect to the free energy of a reference state which allows us to define a BCS functional with boundary conditions at infinity. Our main result is a lower bound for this energy functional in terms of expressions that typically appear in Ginzburg-Landau functionals.
\end{abstract}


\section{Introduction, set-up and main results}
\label{sec:introduction_and_main_results}
\subsection{Introduction}
\label{subsec:introduction}
After the phenomenon of superconductivity was discovered in 1911 by Heike Kamerlingh Onnes, it took almost 46 years of intensive research before the microscopic origin of this remarkable effect could be explained. It was in 1957 when Bardeen, Cooper and Schrieffer (BCS) published their famous paper with the title "Theory of Superconductivity" \cite{BCS}, in which they introduced the first, generally accepted, microscopic model of superconductivity based on the idea of electron pairing due to an effective attraction mediated by phonons. In recognition of this work they were awarded the Nobel prize in 1972. \newline
After this succuess in understanding the phenomenon of superconductivity \cite{Parks}, it was realized that the mathematical ideas underlying the BCS theory of superconductivity and the physical idea of pairing in fermionic systems can also be used to describe the normal to superfluid phase transition in liquid helium-3 and in cold fermionic gases \cite{Leggett}. In this situation, one has to replace the usual non-local phonon-induced interaction in the gap equation by a local pair potential suitable to describe typical interactions occuring in these systems. Due to numerous breakthroughs in the experimental realisation of bosonic and fermionic quantum gases in the past 20 years, this direction of research has gained increasing importance in physics as well as in mathematics \cite{coldatoms1,coldatoms2,reviewBCS}. \newline
Apart from being a paradigmatic model in solid state physics and in the theory of cold quantum gases, the BCS theory of superconductivity/superfluidity, that is the BCS gap equation and the BCS functional, shows a rich mathematical structure which has been well recognised. See \cite{Odeh,Billard_Fano,Vansevenant,Yang1, McLeod_Yang,Yang2} for works on the gap equation with interaction kernels suitable to describe the physics of conduction electrons in solids and \cite{BCS_general_pair_interaction,Critical_temperature_BCS,Critical_temperature_and_energy_gap,Spectral_properties_of_the_BCS_Eq,
Gerhard1,spin_polarized_BCS,Frank_Lemm,symmetrybreaking} for works that treat the translation-invariant BCS functional with a local pair interaction.  \newline
Recently, there has also been considerable interest in the BCS model with external electric and/or magnetic fields. While in the translation-invariant setting, the BCS gap equation is a nonlinear integral equation for a function depending on one space variable, this is no longer true if external fields are added to the model. On the level of the gap equation one is faced with a nonlinear operator equation while in the case of the BCS functional it is a non-commutative variational problem. To our knowledge, the first mathematical work in which the BCS functional with external fields has been considered is due to Frank, Hainzl, Seiringer and Solovej who managed to rigorously prove a connection between the microscopic BCS theory and the equally famous macroscopic Ginzburg-Landau theory of superconductivity \cite{GL_original}. In a non-rigorous way this connection has been discovered by Gorkov in 1959 \cite{Gorkov}. They show, in a certain scaling limit where the temperature of the sample is assumed to be close to the critical temperature of the translation-invariant model, that the BCS free energy equals the free energy of the normal state plus a correction, which to leading order is given by the infimum of a suitably chosen Ginzburg-Landau functional. Using similar techniques, the same authors investigate in \cite{external_field_T_c} the influence of the external fields on the critical temperature. Other works considering BCS theory with external fields from various perspectives are \cite{linearTcdependence,BCS_low_density1,BCS_low_density2,Frank_Lemm2,SigalBdG}. In \cite{Chris_Schlein,dynamics_1,dynamics_2} the equations are investigated from the viewpoint of dynamics.  \newline
In this work we consider a version of the BCS functional describing a sample of fermionic atoms under the influence of an appropriately localised external potential. We measure the formally infinite free energy of the sample with respect to that of a reference state in the spirit of \cite{HainzlLewinSeiringer,positivedensityliebthirring} and thereby obtain a BCS functional with boundary conditions at infinity. This has to be contrasted with previous works \cite{GL_original,external_field_T_c} where a periodic version of the BCS functional has been studied. Our main result is a lower bound for this BCS functional in terms of expressions that typically appear in Ginzburg-Landau functionals.
\subsection{Set-up}
\label{subsec:the_bcs_functional}
\subsubsection{BCS states and the translation-invariant BCS functional} 
\label{subsect:translation_invariant_BCS}
In BCS theory the quantum mechanical state of a system is most conveniently described by its generalized one-particle density matrix \cite{BachLiebSolovej}. We call an operator $\Gamma \in \mathcal{L}\left( L^2(\mathbb{R}^3) \oplus L^2(\mathbb{R}^3) \right)$ a BCS state if it is of the form
\begin{equation}
\Gamma = \begin{pmatrix} \gamma & \alpha \\ \overline{\alpha} & 1 - \overline{\gamma} \end{pmatrix} \quad \text{ with } \quad 0 \leq \Gamma \leq 1. \label{eq:ch3setupA1}
\end{equation}
Here, $\overline{\alpha} = C \alpha C$ with $C$ denoting complex conjugation. The operators $\gamma$ and $\alpha$ are usually called the one-particle density matrix and the Cooper-pair wave function of the state $\Gamma$, respectively. With the above definitions we cannot conclude that $\gamma$ and $\alpha$ have integral kernels. Nevertheless, admissible states, which we are going to define below, will have this property. In terms of integral kernels, the definition $\overline{\alpha} = C \alpha C$ reads $\overline{\alpha}(x,y) = \overline{\alpha(x,y)}$. For the operators $\gamma$ and $\alpha$, the fact that $\Gamma$ is self-adjoint implies $\gamma^* = \gamma$ and $\alpha^* = \overline{\alpha}$, where the second condition can be rephrased as $\alpha(x,y) = \alpha(y,x)$. The spatial Cooper-pair wave function $\alpha(x,y)$ is symmetric because we do not include spin variables and always assume that Cooper-pairs are in a spin singlet state. This makes the overall Cooper-pair wave function antisymmetric under a combined exchange of position and spin variables. The condition $0 \leq \Gamma \leq 1$ implies  $0 \leq \gamma \leq 1$ as well as $\alpha \overline{\alpha} \leq \gamma \left( 1 - \gamma \right)$.  

In the absence of an external field, it is reasonable to consider translation-invariant states, that is, states with $\alpha(x,y) = \alpha(x-y)$ and $\gamma(x,y) = \gamma(x-y)$. A translation-invariant version of the BCS functional, whose minimizer will play an important role in the definition of our BCS functional with an external field, has been introduced and studied in detail in \cite{BCS_general_pair_interaction}. It reads 
\begin{align}
\mathcal{F}_{\beta}^{ti}(\Gamma) &= \int_{\mathbb{R}^3} (p^2 - \mu) \hat{\gamma}(p) \text{d}p + \int_{\mathbb{R}^3} V(x) \left| \alpha(x) \right|^2 \text{d}x - T S(\Gamma), \label{eq:ch3setupA2} \\
S(\Gamma) &= - \frac{1}{2} \int_{\mathbb{R}^3} \text{Tr}_{\mathbb{C}^2} \left[ \hat{\Gamma}(p) \ln\left(\hat{\Gamma}(p)\right) + (1-\hat{\Gamma}(p)) \ln\left(1-\hat{\Gamma}(p)\right) \right] \text{d}p, \nonumber \\
\hat{\Gamma}(p) &= \begin{pmatrix} \hat{\gamma}(p) & \hat{\alpha}(p) \\ \overline{\hat{\alpha}(p)} & 1 - \overline{\hat{\gamma}(p)} \end{pmatrix}. \nonumber
\end{align}
Here, $T = \beta^{-1} \geq 0$, $\mu \in \mathbb{R}$ and $V \in L^{3/2}(\mathbb{R}^3)$ denote the temperature, the chemical potential and the interaction potential, respectively. It has been shown in \cite{BCS_general_pair_interaction} that there exists a critical temperature $T_{c} \geq 0$ such that for all $T \geq T_c$ the BCS functional is minimized by the pair $\alpha = 0$ and $\gamma(p) = ( 1+\exp[(p^2 - \mu)/(2 T)] )^{-1}$, that is, no Cooper-pairs are present and $\gamma$ is the one-particle density matrix of a free Fermi gas. In contrast, if $T<T_c$ the minimizer will be of the form
\begin{align}
\Gamma_0(p) &= \begin{pmatrix} \gamma_0(p) & \hat{\alpha}_0(p) \\ \overline{\hat{\alpha}_0(p)} & 1 - \overline{\gamma_0(p)} \end{pmatrix} = \frac{1}{1+e^{\beta H_0(p)}}, \label{eq:ch3setupA3} \\
H_0(p) &= \begin{pmatrix} p^2 - \mu & \hat{\Delta}_0(p) \\ \overline{\hat{\Delta}_0(p)} & -(p^2 - \mu) \end{pmatrix}, \nonumber \\
\Delta_0(x) &= 2 V(x) \alpha_0(x) \nonumber
\end{align}
with $\alpha_0 \neq 0$. The three equations Eq.~\eqref{eq:ch3setupA3} are the Euler-Lagrange equations of the translation-invariant BCS functional $\mathcal{F}_{\beta}^{ti}$. The first of the above set of equations can be rewritten in terms of two equations, one for $\alpha_0$ alone and another one that allows us to compute $\gamma_0$ if $\alpha_0$ is given, see \cite{BCS_general_pair_interaction}. The equation for $\alpha_0$ reads
\begin{equation}
\left( K_T^{\Delta_0} + V \right) \alpha_0 = 0, \label{eq:ch3setupA4}
\end{equation}
where
\begin{align}
K_T^{\Delta_0} = \frac{E(-i\nabla)}{\tanh\left( \frac{E(-i\nabla)}{2T} \right)} \quad \text{ and } \quad E(p) = \sqrt{(p^2-\mu)^2 + \left| \hat{\Delta}_0(p) \right|^2}.  \label{eq:ch3setupA5}
\end{align}
Eq.~\eqref{eq:ch3setupA4} can be written as an equation for $\Delta_0$ alone and as such appears in the physics literature under the name gap equation. 
\subsubsection{The BCS functional with boundary conditions at infinity}
\label{ch:The_BCS_functional_and_the_GL_sclaing}
Let $W$ denote an external potential which is assumed to go to zero at infinity. Our goal is to give a meaning to the formal expression
\begin{align}
``\mathcal{F}_\beta(\Gamma) &= \text{Tr}_{L^2(\mathbb{R}^3)} \left[ \left( -\Delta - \mu + W \right) \gamma \right] + \int_{\mathbb{R}^6} V(x-y) \left| \alpha(x,y) \right|^2 \text{d}(x,y) - T S(\Gamma)", \nonumber \\
``S(\Gamma) &= - \frac{1}{2} \text{Tr}_{L^2(\mathbb{R}^3) \oplus L^2(\mathbb{R}^3)} \left[ \Gamma \ln(\Gamma) + (1-\Gamma) \ln(1-\Gamma) \right]", \label{eq:ch3setupB1}
\end{align}
or more precisely to 
\begin{equation}
``\mathcal{F}_{\beta}(\Gamma,\Gamma') = \mathcal{F}_{\beta}(\Gamma) - \mathcal{F}_{\beta}(\Gamma')", \label{eq:ch3setupB2}
\end{equation}
where $\Gamma'$ is a reasonably chosen reference state. The most natural candidate for a reference state is certainly $\Gamma_0$, the minimizer of the translation-invariant BCS functional. (Our assumptions will imply that $\Gamma_0$ is unique up to a phase in front of the Cooper-pair wave function $\alpha_0$.) Nevertheless, it turns out to be useful to work with the state $\Gamma_0^w$ defined by
\begin{align}
\Gamma_{0}^w &= \begin{pmatrix} \gamma_{0}^w & \alpha_{0}^w \\ \overline{\alpha_{0}^w} & 1- \overline{\gamma_{0}^w} \end{pmatrix} = \frac{1}{1+e^{\beta H_{0}^w}},  \label{eq:ch3setupB3} \\
H_{0}^w &= \begin{pmatrix} k(-i \nabla) + W & \hat{\Delta}_0(-i\nabla) \\ \overline{\hat{\Delta}}_0(-i\nabla) & -\left(k(-i \nabla ) + W \right) \end{pmatrix} \nonumber
\end{align} 
and $k(p) = p^2 -\mu$, instead. The advantage of $\Gamma_0^w$ compared to $\Gamma_0$ is that it allows us to include the external potential $W$ in the relative entropy. If we insert $\Gamma_0^w$ into Eq.~\eqref{eq:ch3setupB2} and formally rearrange the outcome, we obtain
\begin{align}
\mathcal{F}_{\beta}\left(\Gamma,\Gamma_0^w\right) &= \frac{1}{2 \beta} \mathcal{H}\left(\Gamma,\Gamma_0^w\right) + \int_{\mathbb{R}6} V(x-y) \left| \alpha(x,y) - \alpha_0^w(x,y) \right|^2 \text{d}(x,y) \label{eq:ch3setupB4} \\
&\hspace{0.5cm} + 2 \text{Re} \int_{\mathbb{R}^6} V(x-y) \left( \alpha(x,y) - \alpha_0^w(x,y) \right) \overline{ \left( \alpha_0^w(x,y) - \alpha_0(x-y) \right) } \text{d}(x,y) \nonumber
\end{align}
where the relative entropy $\mathcal{H}\left(\Gamma,\Gamma_0^w\right)$ of the state $\Gamma$ with respect to the state $\Gamma_0^w$ is defined by
\begin{align}
\mathcal{H}\left(\Gamma,\Gamma_0^w \right) &= \text{Tr} \left[ \varphi(\Gamma) - \varphi(\Gamma_0^w) - \frac{\text{d}}{\text{d}s} \varphi(s\Gamma - (1-s) \Gamma_0^w) \Big|_{s=0} \right], \label{eq:ch3setupB5} \\
\varphi(x) &= x \ln(x) + (1-x) \ln(1-x). \nonumber
\end{align} 
We highlight the term $\frac{\text{d}}{\text{d}s} \varphi(s\Gamma - (1-s) \Gamma_0^w) \Big|_{s=0}$ in Eq.~\eqref{eq:ch3setupB5}. It should be compared with $\varphi'( \Gamma_0^w ) (\Gamma - \Gamma_0^w)$ which usually appears in the definition of the relative entropy. For matrices and trace-class operators both terms yield the same result. For general bounded (and possible non-compact) operators, the definition we use here has more favourable mathematical properties and it appears naturally in the thermodynamic limit, see \cite{Mathieu_Julien,ACR_relative_entropy}. If we had chosen $\Gamma_0$ as a reference state the external potential could not be included in the relative entropy and we would have to deal with a term of the form $\text{Tr}[W(\gamma-\gamma_0)]$ instead. We obt for the other option because we do not know how to control this term.   

We call a BCS state $\Gamma$ \textit{admissible} if $\mathcal{H}(\Gamma,\Gamma_0^w) < \infty$ holds. When we apply the inequality for the relative entropy that has been proven in Lemma~\ref{lem:rel_entropy_inequality} to $\mathcal{H}(\Gamma,\Gamma_0^w)$ and neglect the second term on the right-hand side of Eq.~\eqref{eq:rel_entropy_1}, we obtain
\begin{equation}
\mathcal{H}(\Gamma,\Gamma_0^w) \geq \text{Tr} \left[ (\Gamma-\Gamma_0^w) \frac{\beta H_0^w}{\tanh\left(\frac{\beta H_0^w}{2}\right)} \left( \Gamma- \Gamma_0^w \right) \right] \geq  2 \text{Tr} \left[ (\Gamma-\Gamma_0^w)^2 \right]. \label{eq:ch3setupC2}
\end{equation}
To come to the right-hand side of the above equation, we used $x/\tanh\left(\frac{x}{2T}\right) \geq 2T$. Hence, for any \textit{admissible} state $\Gamma$ the operator $\Gamma-\Gamma_0^w$ is Hilbert-Schmidt. Since $\Gamma_0^w$ has an integral kernel, see Appendix~\ref{Appendix}, this implies that $\Gamma$, and therefore $\gamma$ and $\alpha$, also have integral kernels. In particular, we have $\alpha - \alpha_0^w \in L^2(\mathbb{R}^6)$ and $\gamma - \gamma_0^w \in L^2(\mathbb{R}^6)$. When we assume that $V$ fulfills the assumptions of Lemma~\ref{lem:properties_of_alpha0_W}, it can be applied to control the third term on the right hand side of Eq.~\eqref{eq:ch3setupB4}. Hence, $| \mathcal{F}_{\beta}(\Gamma,\Gamma_0^w)|< \infty$ for any \textit{admissible} state $\Gamma$. By allowing only for \textit{admissible} states we impose the condition that $\alpha(x,y)$ and $\gamma(x,y)$ behave as $\alpha_0(x-y)$ (for one choice of phase) and $\gamma_0(x-y)$ for large values of $|x|$ and/or $|y|$, respectively. In that sense we impose boundary conditions at infinity. 

Our main result is a lower bound for the BCS functional defined in Eq.~\eqref{eq:ch3setupB4} when $T$ is close to (and strictly below) the critical temperature $T_c$ of the translation-invariant BCS functional. To be more precise, we investigate the BCS functional in the scaling that has been introduced in \cite{GL_original} to show the connection between BCS theory and Ginzburg-Landau theory. Hence, we choose $T$ as $T = T_c (1-D h^2)$ for $D>0$ and $h \ll 1$. In macroscopic coordinates, the external potential, the interaction potential and the momentum operator are given by $h^2 W(x)$, $V(x/h)$ and $-ih \nabla$, respectively. The BCS functional in the above scaling reads
\begin{align}
&\mathcal{F}_{\beta}(\Gamma,\Gamma_0^w) = \frac{1}{2 \beta} \mathcal{H}(\Gamma,\Gamma_0^w) + \int_{\mathbb{R}6} V\left(\frac{x-y}{h} \right) \left| \alpha(x,y) - \alpha_0^w(x,y) \right|^2 \text{d}(x,y) \label{eq:ch3setupB6} \\
&\hspace{0.1cm} + 2 \text{Re} \int_{\mathbb{R}^6} V\left( \frac{x-y}{h} \right) \left( \alpha(x,y) - \alpha_0^w(x,y) \right) \overline{ \left( \alpha_0^w(x,y) - h^{-3}\alpha_0\left( \frac{x-y}{h} \right) \right) } \text{d}(x,y), \nonumber
\end{align}
where
\begin{align}
\Gamma_0^w &= \begin{pmatrix} \gamma_0^w & \alpha_0^w \\ \overline{\alpha_0^w} & 1-\overline{\gamma_0^w} \end{pmatrix} = \frac{1}{1+e^{\beta H_0^w}}, \label{eq:ch3setupB7} \\
H_0^w &= \begin{pmatrix} k(-ih\nabla) + h^2 W(x) & \hat{\Delta}_0(-ih \nabla) \\ \hat{\Delta}_0(-ih \nabla) & -\left( k(-ih\nabla) + h^2 W(x) \right) \end{pmatrix}. \nonumber
\end{align}
This is the version of the BCS functional we are going to study in the rest of this text.
\subsection{Main results}
Before we state our results, let us make a few assumptions. In order to be able to carry out computations in a convenient way, we assume some regularity conditions for our potentials $V$ and $W$. By $H^n(\mathbb{R}^3)$ and $W^{n,p}(\mathbb{R}^3)$ we denote the usual Sobolev spaces equipped with their natural norms.
\begin{assumption}
\label{assumption1}
\textit{We assume for the interaction potential $V$ that $V \in H^1(\mathbb{R}^3) \cap W^{1,\infty}(\mathbb{R}^3)$ together with $\hat{V} \in L^1(\mathbb{R}^3) \cap H^4(\mathbb{R}^3) \cap W^{2,\infty}(\mathbb{R}^3)$. The external potential $W$ obeys $W \in H^1(\mathbb{R}^3) \cap W^{1,\infty}(\mathbb{R}^3)$ with $\hat{W} \in L^1(\mathbb{R}^3) \cap W^{4,1}(\mathbb{R}^3) \cap W^{4,\infty}(\mathbb{R}^3)$ and $(1+(\cdot)^2) \hat{W} \in L^{\infty}(\mathbb{R}^3)$. Additionally, we assume that $V$ and $W$ are symmetric functions, that is $V(-x) = V(x)$ and $W(-x)=W(x)$ for all $x \in \mathbb{R}^3$. }
\end{assumption}
Most of the above assumptions could be relaxed, but we rather prefer to keep the proofs to a reasonable length. On the other hand, the following assumptions for the interaction potential $V$ are crucial.   
\begin{assumption}
\textit{The potential $V$ is such that the following two statements are true: (i) $T_c > 0$, (ii) The ground state of the operator $K^{0}_{T_c} + V$ is non-degenerate.}
\end{assumption}
\begin{remark}
\label{remark1}
Property (i) holds for example if $\mu > 0$ and $V \in L^{3/2}(\mathbb{R}^3)$ is negative and not identically zero, see \cite[Theorem~3]{BCS_general_pair_interaction}. The same Theorem also tells us that one can add a positive part $V_+$ to the potential $V$ and thereby keep the property $T_c > 0$ if $V_+$ is small enough in a suitable sense.
\end{remark}
\begin{remark}
\label{remark3}
Property (ii) implies the uniqueness of the minimizer of the translation-invariant BCS functional, see \cite[Section~4.6]{reviewBCS}. It holds for example if $\hat{V} \leq 0$ with $\hat{V} \not\equiv 0$. The assumption also implies that there exists a number $0\leq T'<T_c$ such that for all $T \in [T',T_c)$, $\alpha_0(T)$ is the unique ground state of the operator $K_T^{\Delta_0} + V$ \cite[Section~4.6]{reviewBCS}.
\end{remark}
To state our result, we have to introduce a decomposition of the Cooper-pair wave function $\alpha$. Let the measurable function $\psi$ be given by 
\begin{equation}
\psi(y) = \frac{\int_{\mathbb{R}^3} \alpha_0\left( \frac{x-y}{h} \right) \alpha(x,y) \text{d}x}{\int_{\mathbb{R}^3} \left| \alpha_0(x) \right|^2 \text{d}x}. \label{eq:ch3setupD3} 
\end{equation}
We note that $\psi$ depends on temperature because $\alpha_0$ does. Then $\alpha$ can be written as
\begin{equation}
\alpha(x,y) = h^{-3} \alpha_0\left( \frac{x-y}{h} \right) \frac{\psi(x) + \psi(y)}{2} + \xi(x,y).
\label{eq:ch3setupD4}
\end{equation}
A similar decomposition also plays an important role in the derivation of Ginzburg-Landau theory in \cite{GL_original}. 

Our main theorem is a lower bound for the BCS functional defined in Eq.~\eqref{eq:ch3setupB6} which implies a-priori estimates for states $\Gamma$ with energy less than or equal to that of the reference state.
\begin{theorem}
\label{thm:lower_bound_BCS}
Let $\mathcal{F}_{\beta}$ be given by Eq.~\eqref{eq:ch3setupB6} with $\beta^{-1}=T=T_c (1-Dh^2)$, $D>0$ and let $\Gamma$ be an \textit{admissible} state with $\mathcal{F}_{\beta}\left( \Gamma, \Gamma_0^w \right) \leq 0$. Then for $h>0$ small enough, there exist constants $C_1, C_2 >0$ such that
\begin{align}
\mathcal{F}_{\beta}(\Gamma,\Gamma_0^w) &\geq C_1 \Bigg( h \left\Vert \nabla \psi \right\Vert_{L^2(\mathbb{R}^3)}^2 + h \left\Vert |\psi|^2-1 \right\Vert_{L^2(\mathbb{R}^3)}^2 + \left\Vert \xi \right\Vert_{H_1(\mathbb{R}^6)}^2 \label{eq:ch3setupD5} \\
&\hspace{7cm} + \left\Vert \gamma - \gamma_0^w \right\Vert_{H^1(\mathbb{R}^6)}^2 \Bigg) - C_2 h. \nonumber
\end{align}
In the above equation, the $H^1(\mathbb{R}^6)$-norms are, according to our choice of coordinates, given by $\left\Vert f \right\Vert_{H^1(\mathbb{R}^6)}^2 = \left\Vert f \right\Vert_{L^2(\mathbb{R}^6)}^2 + \left\Vert h \nabla_x f \right\Vert_{L^2(\mathbb{R}^6)}^2 + \left\Vert h \nabla_y f \right\Vert_{L^2(\mathbb{R}^6)}^2$. Eq.~\eqref{eq:ch3setupD5} implies the a-priori bounds
\begin{align}
\left\Vert \nabla \psi \right\Vert_{L^2(\mathbb{R}^3)} + \left\Vert |\psi|^2-1 \right\Vert_{L^2(\mathbb{R}^3)} &\leq C, \label{eq:ch3setupD6} \\
\left\Vert \xi \right\Vert_{H^1(\mathbb{R}^6)} + \left\Vert \gamma - \gamma_0^w \right\Vert_{H^1(\mathbb{R}^6)} &\leq C h^{1/2}, \nonumber 
\end{align}
for an appropriately chosen constant $C>0$.
\end{theorem}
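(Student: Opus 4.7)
The plan is to start from the explicit representation (\ref{eq:ch3setupB6}) of $\mathcal{F}_\beta(\Gamma,\Gamma_0^w)$, apply the relative-entropy inequality of Lemma~\ref{lem:rel_entropy_inequality} keeping \emph{both} terms on its right-hand side, and then insert the decomposition (\ref{eq:ch3setupD4}) of $\alpha$ to separate the Ginzburg-Landau-type contributions in $\psi$ from the corrections carried by $\xi$ and $\gamma-\gamma_0^w$. The first summand of the relative-entropy inequality, $\text{Tr}[(\Gamma-\Gamma_0^w)\,\beta H_0^w/\tanh(\beta H_0^w/2)\,(\Gamma-\Gamma_0^w)]$, is responsible for the $\Vert\gamma-\gamma_0^w\Vert_{H^1}^2$, the $\Vert\xi\Vert_{H^1}^2$ and the $h\,\Vert\nabla\psi\Vert_{L^2}^2$ pieces of (\ref{eq:ch3setupD5}); the second, higher-order summand must produce the quartic $h\,\Vert|\psi|^2-1\Vert_{L^2}^2$.

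First I would split $\Gamma-\Gamma_0^w$ into its diagonal ($\gamma-\gamma_0^w$) and off-diagonal ($\alpha-\alpha_0^w$) blocks. Because the symbol of $\beta H_0^w/\tanh(\beta H_0^w/2)$ is bounded below by $|H_0^w|$, which itself dominates $k(-ih\nabla)+h^2 W$, the diagonal block yields directly $\Vert\gamma-\gamma_0^w\Vert_{H^1(\mathbb{R}^6)}^2$ up to absorbable constants. Combining the off-diagonal block with the interaction term $\int V((x-y)/h)|\alpha-\alpha_0^w|^2\,\text{d}(x,y)$ in (\ref{eq:ch3setupB6}) gives an operator expression of the form $2\,\langle\alpha-\alpha_0^w,\,(L_T^w + V(\cdot/h))(\alpha-\alpha_0^w)\rangle$, where $L_T^w$ is the natural two-body operator version of $K_T^{\Delta_0}$ incorporating $W$. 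The third (cross) term of (\ref{eq:ch3setupB6}) is controlled by Cauchy-Schwarz combined with Lemma~\ref{lem:properties_of_alpha0_W}, which delivers a bound of order $h^2$ on $\alpha_0^w - h^{-3}\alpha_0(\cdot/h)$; after replacing $\alpha_0^w$ by $h^{-3}\alpha_0((x-y)/h)$ in the quadratic form and $L_T^w$ by $L_T:=K_T^{\Delta_0}$, all corrections fit into the $-C_2 h$ error of (\ref{eq:ch3setupD5}).

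Next I would insert (\ref{eq:ch3setupD4}) into the remaining quadratic form. The crucial feature is the orthogonality built into (\ref{eq:ch3setupD3}): for every fixed $y$, $\int \alpha_0((x-y)/h)\,\xi(x,y)\,\text{d}x = 0$, so the $\psi$-part and $\xi$-part of $\alpha$ are $L^2$-orthogonal and the quadratic form splits to leading order. Under Assumption~2 the operator $K_{T_c}^{\Delta_0}+V$ has $\alpha_0$ as unique zero-eigenfunction and a spectral gap $\kappa>0$, so at $T=T_c(1-Dh^2)$ the ground-state eigenvalue of $K_T^{\Delta_0}+V$ drops by an amount of order $-h^2$ by a standard perturbation in $T$. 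On the $\xi$-part this gap together with the large-momentum $p^2$-growth of $K_T^{\Delta_0}$ supplies the full $\Vert\xi\Vert_{H^1(\mathbb{R}^6)}^2$ bound. On the $\psi$-part, the change of variables $u=(x-y)/h$, $X=(x+y)/2$ together with a semiclassical expansion of $L_T + V(\cdot/h)$ acting on $h^{-3}\alpha_0(u)\,\tfrac{\psi(X+hu/2)+\psi(X-hu/2)}{2}$ produces the Ginzburg-Landau gradient term of order $h\int|\nabla\psi|^2\,\text{d}X$ together with a negative $-D'h\int|\psi|^2\,\text{d}X$ contribution coming from the perturbed negative ground-state eigenvalue.

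The main obstacle is producing the quartic term $h\,\Vert|\psi|^2-1\Vert_{L^2}^2$: this cannot arise from the quadratic form alone but only from the higher-order summand of Lemma~\ref{lem:rel_entropy_inequality}, a trace that is fourth order in $\Gamma-\Gamma_0^w$, possibly combined with the state constraint $\alpha\overline{\alpha}\leq\gamma(1-\gamma)$. Substituting (\ref{eq:ch3setupD4}) into this fourth-order summand should give a leading quartic contribution of order $h\int|\psi|^4\,\text{d}X$; together with the $-D'h\int|\psi|^2\,\text{d}X$ of the previous paragraph and a leftover constant absorbed into $-C_2 h$, this completes the square into $h\,\Vert|\psi|^2-1\Vert_{L^2}^2$. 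The delicate point throughout is that every intermediate estimate produces cross-error terms in $\psi$, $\xi$, $\gamma-\gamma_0^w$ and $W$ which scale differently in $h$; ensuring that each such error can be absorbed into one of the four positive terms on the right-hand side of (\ref{eq:ch3setupD5}), while keeping $C_1,C_2>0$, is the heart of the proof and is where I expect the main technical difficulty to lie.
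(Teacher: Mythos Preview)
Your overall architecture---apply Lemma~\ref{lem:rel_entropy_inequality} with both terms, split $\Gamma-\Gamma_0^w$ into blocks, and use the decomposition~\eqref{eq:ch3setupD4}---matches the paper. But the mechanism you propose for the quartic term is wrong in a way that cannot be repaired.

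The whole point of the ``boundary conditions at infinity'' setup is that admissible $\alpha$ approach $h^{-3}\alpha_0((x-y)/h)$ at infinity, so $\psi\to 1$ and neither $\int|\psi|^2$ nor $\int|\psi|^4$ is finite. Your plan to obtain a negative $-D'h\int|\psi|^2\,\mathrm{d}X$ from the perturbed ground-state eigenvalue, a positive $h\int|\psi|^4\,\mathrm{d}X$ from the fourth-order relative-entropy term, and then ``complete the square'' into $h\,\Vert|\psi|^2-1\Vert_{L^2}^2$, therefore collapses: you would be subtracting two infinities. This is not a technicality; it is exactly the feature that distinguishes the whole-space functional from the periodic one in \cite{GL_original}, and the paper stresses (Remark~\ref{remark6}) that the lack of an $L^2$ bound on $\psi$ (or $\psi-1$) is the central obstacle.

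What actually happens is that the quartic term in Lemma~\ref{lem:rel_entropy_inequality} produces $\text{Tr}\bigl(\alpha\overline{\alpha}-\alpha_0^w\overline{\alpha_0^w}\bigr)^2$, and since the \emph{reference} is already subtracted inside, inserting $\alpha=\alpha_0\psi+\xi_0$ gives as leading piece $\text{Tr}\bigl(\alpha_0\,\Phi\,\alpha_0\bigr)^2$ with $\Phi=|\psi|^2-1$ directly---no completing the square. This yields $h\,\Vert\hat{\Phi}\Vert_{L^2(B_r)}^2$ on low Fourier modes. The remaining difficulty, which your proposal does not address, is that all cross-error terms (from the $W$-corrections to $H_0^w/\tanh(\beta H_0^w/2)$ and from the linear interaction term) involve $\varphi=\psi-1$, whose $L^2$-norm is not available. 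The paper instead bounds every such error by $h\int|\hat{\Phi}(p)|\,g(p)\,\mathrm{d}p$ for a decaying weight $g$, then runs a two-stage bootstrap: first use the hypothesis $\mathcal{F}_\beta\le 0$ together with a Fourier-splitting lemma (Lemma~\ref{lem:step3_lem1}) to obtain the crude bounds $\Vert\nabla\psi\Vert^2\lesssim 1+\Vert\hat{\Phi}\Vert_{L^2(B_r)}$ and $\Vert\xi\Vert^2\lesssim h(1+\Vert\hat{\Phi}\Vert_{L^2(B_r)})$, and only then feed these back to close the estimate via the quartic term. Without this bootstrap and without the $\Phi$-based error control, the proof does not go through.
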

\begin{remark}
\label{remark4}
The construction of a lower bound for the periodic version of the BCS functional investigated in \cite{GL_original,external_field_T_c} is not difficult and goes along the same lines as the one for the translation-invariant BCS functional. A crucial ingredient in this case is the relation $\alpha \overline{\alpha} \leq \gamma (1-\gamma)$ which cannot be used in our problem. We treat the formal difference of two BCS functionals, and hence one inequality would always go in the wrong direction. This makes the same problem for our functional much harder. Although infinite volume functionals have been considered in the situation where states can be described by infinite volume versions of slater determinants, see e.g. \cite{HainzlLewinSeiringer,positivedensityliebthirring}, this is to our knowledge the first time that such a functional has been considered for quasi-free states with a pairing function. 
\end{remark}
\begin{remark}
\label{remark6}
The first two terms on the right hand side of Eq.~\eqref{eq:ch3setupD5} already look like two expressions that typically appear in Ginzburg-Landau functionals which (without a magnetic field) take the form
\begin{equation}
\mathcal{E}(\psi) = \int_{\mathbb{R}^3} \left[ \overline{\nabla \psi(x)} B_1 \nabla \psi(x) + B_2 W(x) \left| \psi(x) \right|^2 + B_3 \left| 1- \left| \psi(x) \right|^2 \right|^2 \right] \text{d}x.
 \label{eq:ch3setupD7}
\end{equation}
Here $B_1$ is some positive $3 \times 3$ matrix, $B_2 \in \mathbb{R}$ and $B_3 > 0$. \newline
The a-priori estimates for states with energy less than or equal to that of the reference state in Theorem~\ref{thm:lower_bound_BCS} should be compared with \cite[Chapter~5]{GL_original}, where related estimates for states with energy less than or equal to that of the normal state are derived. As the estimates in \cite[Chapter~5]{GL_original} are the first step in the derivation of Ginzburg-Landau theory, the a-priori estimates proved here can be the starting point for a similar analysis in our set-up.  Formal computations indicate that a similar Theorem as \cite[Theorem~1]{GL_original} also holds in our case if $B_1,B_2$ and $B_3$ are chosen like in \cite[Eqs.~(1.19)-(1.21)]{GL_original}. \newline
The remaining steps in the hypothetical proof of this conjecture are far from being straightforward, however. Let us mention what we consider to be main sources of difficulties: An important ingredient used in the proof of the upper and the lower bound in \cite{GL_original} is the a-priori bound for the $L^2(\mathbb{R}^3)$-norm of $\psi$. On the one hand, it serves well in error estimates and on the other hand it allows one to use techniques based on Fourier analysis. In our situation, $\psi-1$, which appears naturally in the formulas as a replacement for $\psi$, does not have a bounded $L^2(\mathbb{R}^3)$-norm, we control only $\left\Vert |\psi|^2-1 \right\Vert_{L^2(\mathbb{R}^3)}$. Therefore, the method introduced in \cite{GL_original} does not work for our set-up without major adjustments. Additionally, the fact that typical BCS states are non-compact in our setting complicates the analysis at several places considerably. Compare e.g. the proof of \cite[Lemma~1]{GL_original} with the proof of the same statement in our work, see Lemma~\ref{lem:rel_entropy_inequality} and the definition of the relative entropy, where non-trivial new ideas have been needed. It is to be expected that similar technical problems also arise in the remaining steps of the proof of the above conjecture. \newline
Our Theorem~\ref{thm:lower_bound_BCS} therefore is the first necessary and important step towards an extension of the microscopic derivation of Ginzburg-Landau theory to the physically more realistic situation of a whole-space functional. 
\end{remark}  
\section{Proof of Theorem~\ref{thm:lower_bound_BCS}}
\label{sec:lower_bound_functional}
Our construction of a lower bound for the BCS functional will be done in three steps. In the first step, we prove an inequality for the relative entropy that has been introduced in \cite{GL_original} in the setting of periodic states and will enable us to control the quadratic interaction term in the BCS functional, that is, the second term on the right-hand side of Eq.~\eqref{eq:ch3setupB6}. This inequality is harder to prove in our setting because our states are in general non-compact operators. The proof uses a recent approximation result for a family of generalized relative entropies for bounded operators \cite{Mathieu_Julien,ACR_relative_entropy}. What remains to be controlled after this step are the terms proportional to the external potential $h^2 W$ and the linear interaction term, which is the third term on the right-hand side of Eq.~\eqref{eq:ch3setupB6}. In the second step, we derive bounds on these non-positive terms. Whenever we encounter terms of the form $\text{Tr}\left[ (\gamma-\gamma_0^w) h^2 W (\gamma-\gamma_0^w) \right]$, it is sufficient to use the fact that $h^2 \left\Vert W \right\Vert_{L^{\infty}(\mathbb{R}^3)}$ is small. In contrast, if $h^2 W$ is acting on $\alpha - \alpha_0^w$ we explicitly have to use that $W$ is a function that goes to zero at infinity. In the third and final step, we use the bounds derived in step~1 and step~2 to show that the BCS functional is bounded from below. \newline
During our proof we will frequently encounter the situation where we need to know that certain norms of $\alpha_0$ are of order $h$. This is guaranteed by Lemma~\ref{lem:properties_of_alpha_0_1} and \cite[Proposition~5.6]{Frank_Lemm}. In \cite[Proposition~5.6]{Frank_Lemm} the authors show that for $T<T_c$ with $T$ close to $T_c$ any minimizer of the translation-invariant BCS functional is of the form $\alpha_0=\alpha_{*} + \eta$, where $\alpha_{*}$ lies in the kernel of $K_{T_c}^0+V$ and obeys $\left\Vert \alpha_{*} \right\Vert_{L^2(\mathbb{R}^3)} = O(\sqrt{T_c-T})$. The function $\eta$ is an element of the orthogonal complement of the kernel of $K_{T_c}^0+V$ and behaves as $\left\Vert \eta \right\Vert_{L^2(\mathbb{R}^3)} = O(T_c-T)$.  Together with Assumptions~\ref{assumption1} and Lemma~\ref{lem:properties_of_alpha_0_1}, this implies that all norms of $\alpha_0$ we will encounter are of order $h$. In order to not distract from the main ideas, we will sometimes use these facts without further reference. 
\subsection{Step 1: Lower bound for the relative entropy and domination of the quadratic interaction term} 
\label{subsec:lower_bound_step1}
We start our discussion with a version of an inequality for the relative entropy that has been introduced in \cite{GL_original} in the setting of periodic states. A similar bound without the equivalent of the second positive term on the right hand side of Eq.~\eqref{eq:rel_entropy_1} was used in \cite{HainzlLewinSeiringer}.
\begin{lemma}
\label{lem:rel_entropy_inequality}
Let $\Gamma$ and $\Gamma'$ be BCS states such that $\Gamma' = \left( 1+e^{H} \right)^{-1}$ for some self-adjoint operator $H$ on $L^2(\mathbb{R}^3) \oplus L^2(\mathbb{R}^3)$. Then the inequality
\begin{equation}
\mathcal{H}(\Gamma,\Gamma') \geq \text{Tr} \left[ \left( \Gamma - \Gamma' \right) \frac{H}{\tanh \left( H/2 \right) } \left( \Gamma - \Gamma' \right) \right] + \frac{4}{3} \ \text{Tr} \left[ \Gamma \left( 1-\Gamma \right) - \Gamma' \left( 1 - \Gamma' \right) \right]^2
\label{eq:rel_entropy_1}
\end{equation}
holds.
\end{lemma}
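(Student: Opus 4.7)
The plan is to deduce the operator inequality from a pointwise scalar inequality via spectral calculus, and then to handle the non-trace-class nature of the states by invoking the approximation results of \cite{Mathieu_Julien,ACR_relative_entropy}.

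The first step is the scalar inequality: for $a \in [0,1]$ and $h \in \mathbb{R}$ with $b := (1+e^{h})^{-1}$,
\[
\varphi(a) - \varphi(b) + h(a-b) \;\geq\; \frac{h}{\tanh(h/2)}(a-b)^2 + \frac{4}{3}\bigl(a(1-a) - b(1-b)\bigr)^2.
\]
Both sides vanish at $a = b$ together with their first derivatives in $a$, since $\varphi'(b) = \ln(b/(1-b)) = -h$. Using $\varphi''(x) = 1/(x(1-x))$, the identity $1 - 2b = \tanh(h/2)$, and the factorization $a(1-a) - b(1-b) = (a-b)(1-a-b)$, the bound reduces to an elementary one-variable inequality in $u = a - b$ that can be verified by direct computation; this is essentially the inequality proven in \cite[Lemma~1]{GL_original}.

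The second step lifts the scalar bound to operators, initially assuming that $\Gamma - \Gamma'$ is trace class so that every trace in Eq.~\eqref{eq:rel_entropy_1} converges absolutely. Using the spectral decomposition of $H$, I would perform the standard $2 \times 2$ block analysis: the off-diagonal matrix elements of $\Gamma - \Gamma'$ in the eigenbasis of $H$ produce exactly the operator $H/\tanh(H/2)$ sandwiched between $\Gamma - \Gamma'$ on both sides, while the diagonal contributions, after summation, furnish the $\frac{4}{3}$-term in which $\Gamma(1-\Gamma) - \Gamma'(1-\Gamma')$ appears. This is the strategy of \cite[Lemma~1]{GL_original} (where a closely related statement, with a different corrective term, was obtained in the periodic setting).

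The main obstacle is extending the inequality to a general \emph{admissible} $\Gamma$, for which neither $\Gamma$ nor $\Gamma'$ is individually trace class and $\varphi(\Gamma)$, $\varphi(\Gamma')$ do not have finite trace on their own; only the regularized difference in Eq.~\eqref{eq:ch3setupB5} is finite. To handle this I would use the framework of \cite{Mathieu_Julien,ACR_relative_entropy}, which defines $\mathcal{H}(\Gamma,\Gamma')$ as a limit of well-behaved finite-rank approximations and establishes lower semicontinuity along such sequences. Concretely, I would approximate $\Gamma$ by $\Gamma_n = \Gamma' + K_n$ with $K_n$ of finite rank and valued in $[-\Gamma', 1-\Gamma']$ so that $0 \leq \Gamma_n \leq 1$ remains a BCS state, apply the trace-class version of the inequality to each $\Gamma_n$, and pass to the limit. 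The right-hand side of Eq.~\eqref{eq:rel_entropy_1} is continuous under this approximation because $H/\tanh(H/2) \geq 2$ is bounded below and $\Gamma - \Gamma'$ is Hilbert-Schmidt, so the weighted quadratic form is well-controlled. This approximation argument, rather than the elementary spectral reduction, is the technical heart of the proof and accounts for the main novelty relative to the periodic setting of \cite{GL_original}, in which states were automatically trace class.
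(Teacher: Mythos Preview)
Your overall strategy---scalar inequality, then lift to operators, then approximate---matches the paper's, but the approximation scheme you propose is \emph{not} the one the paper uses, and this difference matters.

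The paper compresses \emph{both} states simultaneously: with an increasing sequence of finite-rank projections $P_n \to 1$ strongly, it sets $\Gamma_n = P_n \Gamma P_n$ and $\Gamma'_n = P_n \Gamma' P_n$. Klein's Lemma applied to these genuine matrices yields the finite-dimensional inequality directly from the scalar one. For the left-hand side, \cite[Theorem~3]{ACR_relative_entropy} is precisely the statement that $\mathcal{H}(\Gamma_n,\Gamma'_n) \to \mathcal{H}(\Gamma,\Gamma')$ under compression, so it applies verbatim. For the right-hand side the paper uses Fatou for traces on the $\tfrac{4}{3}$-term, and for the first term---where the unbounded operator $\frac{\ln((1-\Gamma'_n)/\Gamma'_n)}{1-2\Gamma'_n}$ sits in the middle and changes with $n$---it passes to spectral measures and invokes Fatou for weak-$*$ convergent sequences of measures. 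This last step is the delicate point.

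Your scheme instead fixes $\Gamma'$ and perturbs $\Gamma$ by finite-rank $K_n$. Two gaps arise. First, your step~2 claim that ``every trace converges absolutely'' once $\Gamma-\Gamma'$ is trace class is not correct: $H/\tanh(H/2)$ is unbounded (it grows like $|H|$), so the quadratic form need not be finite, and the ``block analysis in the eigenbasis of $H$'' is not well-posed when $H$ has continuous spectrum---you never actually reduce to matrices. Second, the convergence result in \cite{ACR_relative_entropy} is formulated for the compression approximation, not for your perturbation $\Gamma_n = \Gamma' + K_n$; you would need to prove separately that $\mathcal{H}(\Gamma_n,\Gamma')$ has the right limiting behavior, and lower semicontinuity of relative entropy goes the wrong way for this. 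The paper's compression route sidesteps both issues by making the finite-dimensional step genuinely finite-dimensional and by matching the hypotheses of the cited approximation theorem exactly.
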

\begin{proof}
In \cite[Lemma 1]{GL_original} it has been shown that for any pair of real numbers $0<x$, $y<1$, one has
\begin{equation}
x \ln \left( \frac{x}{y} \right) + (1-x) \ln \left( \frac{1-x}{1-y} \right) \geq \frac{\ln \left( \frac{1-y}{y} \right)}{1-2y}(x-y)^2 + \frac{4}{3} \left( x(1-x) - y(1-y) \right)^2.
\label{eq:rel_entropy_2}
\end{equation}
Let $\lbrace P_n \rbrace_{n=1}^{\infty}$ be an increasing sequence of orthogonal finite dimensional projections with $P_n \rightarrow 1$ in the strong operator topology of $L^2(\mathbb{R}^3) \oplus L^2(\mathbb{R}^3)$. Using Klein's Lemma, see e.g. \cite[Section~2.1.4]{Thirring_4}, Eq.~\eqref{eq:rel_entropy_2} implies
\begin{align}
\mathcal{H}(\Gamma_n,\Gamma_{n}') &\geq \text{Tr} \left[ \left( \Gamma_n - \Gamma_{n}' \right) \frac{\ln \left( \frac{1-\Gamma_{n}'}{\Gamma_{n}'} \right)}{1-2 \Gamma_{n}'} \left( \Gamma_n -  \Gamma_{n}' \right) \right] \label{eq:rel_entropy_3} \\
&\hspace{4cm} + \frac{4}{3} \ \text{Tr} \left[ \Gamma_n \left( 1- \Gamma_{n} \right) - \Gamma_{n}' \left( 1 - \Gamma_{n}' \right) \right]^2, \nonumber
\end{align}
where $\Gamma_n = P_n \Gamma P_n$ and $\Gamma_n'=P_n \Gamma' P_n$. By \cite[Theorem~3]{ACR_relative_entropy} the left-hand side of Eq.~\eqref{eq:rel_entropy_3} converges to $\mathcal{H}(\Gamma,\Gamma')$ as $n$ tends to infinity.

Since $P_n \xrightarrow{\text{s}} 1$ strongly, we know that $\left[ \Gamma_n \left( 1- \Gamma_{n} \right) - \Gamma_{n}' \left( 1 - \Gamma_{n}' \right) \right]^2 \xrightarrow{\text{s}} \left[ \Gamma \left( 1- \Gamma \right) - \Gamma' \left( 1 - \Gamma' \right) \right]^2$. Fatou's Lemma for traces \cite[Theorem~2.7]{Simon_trace_ideals} yields
\begin{align}
\liminf_{n \rightarrow \infty} \text{Tr} \left[ \Gamma_n \left( 1- \Gamma_{n} \right) - \Gamma_{n}' \left( 1 - \Gamma_{n}' \right) \right]^2 \geq \text{Tr} \left[ \Gamma \left( 1- \Gamma \right) - \Gamma' \left( 1 - \Gamma' \right) \right]^2.
\label{eq:rel_entropy_4}
\end{align}
The same strategy cannot be applied for the first term on the right-hand side of Eq.~\eqref{eq:rel_entropy_3} because the operator under the trace in the middle may become unbounded. To obtain a similar result for this term, we choose an orthonormal basis  $\lbrace e_{\alpha} \rbrace_{\alpha =1 }^{\infty}$ of $L^2(\mathbb{R}^3) \oplus L^2(\mathbb{R}^3)$ and use the classical Fatou's Lemma.
\begin{align}
\liminf_{n \rightarrow \infty} \sum_{\alpha = 1}^{\infty} &\left( e_{\alpha}, \left( \Gamma_n - \Gamma_{n}' \right) \frac{\ln \left( \frac{1-\Gamma_{n}'}{\Gamma_{n}'} \right)}{1-2 \Gamma_{n}'} \left( \Gamma_n -  \Gamma_{n}' \right) e_{\alpha} \right) \label{eq:rel_entropy_5} \\
&\hspace{3cm} \geq \sum_{\alpha = 1}^{\infty} \liminf_{n \rightarrow \infty} \left( e_{\alpha}, \left( \Gamma_n - \Gamma_{n}' \right) \frac{\ln \left( \frac{1-\Gamma_{n}'}{\Gamma_{n}'} \right)}{1-2 \Gamma_{n}'} \left( \Gamma_n -  \Gamma_{n}' \right) e_{\alpha} \right). \nonumber
\end{align}
Let $\mu_{n \alpha}$ be the spectral measure of the operator $\Gamma_{n}'$ with respect to the vector $\left( \Gamma_{n}- \Gamma_{n}' \right) e_{\alpha}$. The expectation values in Eq.~\eqref{eq:rel_entropy_5} can be written as
\begin{equation}
\left( \left( \Gamma_n - \Gamma_{n}' \right) e_{\alpha}, \frac{\ln \left( \frac{1-\Gamma_{n}'}{\Gamma_{n}'} \right)}{1-2 \Gamma_{n}'} \left( \Gamma_n -  \Gamma_{n}' \right) e_{\alpha} \right) = \int_0^1 \frac{\ln \left( \frac{1-\lambda}{\lambda} \right)}{1-2 \lambda} \text{d}\mu_{n \alpha}(\lambda).
\label{eq:rel_entropy_6}
\end{equation}
Correspondingly, we denote by $\mu_{\alpha}$ the spectral measure of the operator $\Gamma'$ with respect to the vector $\left( \Gamma- \Gamma' \right) e_{\alpha}$. It can easily be seen that $\mu_{n \alpha} \rightharpoonup \mu_{\alpha}$ for all $\alpha \in \mathbb{N}$ in the weak-$\ast$ topology of the space of Borel measures. To obtain a lower bound for the integral on the right-hand side of Eq.~\eqref{eq:rel_entropy_6}, we use Fatou's Lemma for sequences of measures, see e.g.~\cite[Theorem~30.2]{Bauer}, which implies
\begin{equation}
\liminf_{n \rightarrow \infty} \int_0^1 \frac{\ln \left( \frac{1-\lambda}{\lambda} \right)}{1-2 \lambda} \text{d}\mu_{n \alpha}(\lambda) \geq \int_0^1 \frac{\ln \left( \frac{1-\lambda}{\lambda} \right)}{1-2 \lambda} \text{d}\mu_{\alpha}(\lambda). \label{eq:rel_entropy_7}
\end{equation}
A few algebraic manipulations show $\frac{\ln \left( \frac{1-\Gamma'}{\Gamma'} \right)}{1-2 \Gamma'} = \frac{H}{\tanh(H/2)}$ and conclude our proof.
\end{proof}
When we apply Lemma~\ref{lem:rel_entropy_inequality} to $\mathcal{H}(\Gamma,\Gamma_0^w)$ we obtain two terms. Following the strategy of \cite[Chapter~5]{GL_original}, we now derive a lower bound for the equivalent of the last term on the right hand side of Eq.~\eqref{eq:rel_entropy_1}. When we write the traces of the operator-valued matrices explicitly in terms of their components, it can easily be seen that 
\begin{equation}
\text{Tr} \left[ \Gamma \left( 1-\Gamma \right) - \Gamma_{0}^w \left( 1 - \Gamma_{0}^w \right) \right]^2 \geq \ 2 \text{Tr} \left[ \gamma (1-\gamma) - \gamma_0^w (1-\gamma_0^w) - \alpha \overline{\alpha} + \alpha_0^w \overline{\alpha_0^w} \right]^2
\label{eq:step1_9}
\end{equation}
holds. We also claim that
\begin{align}
&2 \text{Tr} \left(\gamma - \gamma_0^w \right)^2 + \frac{4}{3} \ \text{tr}\left[ \gamma (1-\gamma) - \gamma_0^w \left( 1- \gamma_0^w \right) - \alpha \overline{\alpha} + \alpha_0^w \overline{\alpha_0^w} \right]^2 \label{eq:step1_10} \\
&\hspace{8cm}\geq \frac{4}{5} \ \text{Tr}\left( \alpha \overline{\alpha} - \alpha_0^w \overline{\alpha_0^w} \right)^2, \nonumber
\end{align}
which follows from
\begin{align}
\left\Vert \alpha \overline{\alpha} - \alpha_0^w \overline{\alpha}_0^w \right\Vert_2 &\leq \left\Vert \gamma (1-\gamma) - \gamma_0^w (1-\gamma_0^w) - \left( \alpha \overline{\alpha} - \alpha_0^w \overline{\alpha}_0^w \right) \right\Vert_2 \label{eq:step1_11} \\
&\hspace{6cm} + \left\Vert \gamma (1-\gamma) - \gamma_0^w \left( 1-\gamma_0^w \right) \right\Vert_2 \nonumber
\end{align}
together with 
\begin{equation}
\left\Vert \gamma (1-\gamma) - \gamma_0^w \left( 1- \gamma_0^w \right) \right\Vert_2 \leq \left\Vert \gamma - \gamma_0^w \right\Vert_2.
\label{eq:step1_12}
\end{equation}
In the above formulas, $\left\Vert \cdot \right\Vert_2$ denotes the Hilbert-Schmidt norm.  Eq.~\eqref{eq:step1_12} can most conveniently be seen to hold as follows. Choose an increasing sequence of orthogonal finite rank projections $\left\lbrace P_n \right\rbrace_{n=1}^{\infty}$ acting on $L^2(\mathbb{R}^3)$ that converges to the identity in the strong operator topology. Using Klein's inequality, one easily shows that
\begin{equation}
\text{Tr} \left( \gamma_n \left( 1 - \gamma_n \right) - \gamma_{0,n}^w \left( 1 - \gamma_{0,n}^w \right) \right)^2 \leq \text{Tr} \left( \gamma_n - \gamma_{0,n}^w \right)^2,
\label{eq:step1_13}
\end{equation} 
where we have introduced the notation $\gamma_n = P_n \gamma P_n$ and $\gamma_{0,n}^w = P_n \gamma_0^w P_n$. With arguments similar to the ones used in the proof of Lemma~\ref{lem:rel_entropy_inequality}, one now easily shows the claim.

Now we come to the second term that we obtain after Lemma~\ref{lem:rel_entropy_inequality} has been applied to $\mathcal{H}(\Gamma,\Gamma_0^w)$. It reads
\begin{equation}
\text{Tr} \left[ \left( \Gamma - \Gamma_0^w \right) \frac{H_0^w}{\tanh\left( \frac{\beta H_0^w}{2} \right)} \left( \Gamma - \Gamma_0^w \right) \right]. \label{eq:step1_1}
\end{equation}
In order to write this expression as a sum of $\text{Tr}[(\Gamma-\Gamma_0^w) K_T^{\Delta_0} (\Gamma-\Gamma_0^w)]$, which we will need to control the second term on the right-hand side of Eq.~\eqref{eq:ch3setupB6}, plus corrections that are proportional to $h^2 W$, we use the respresentation \cite[(4.3.91)]{Abramowitz_Stegun},\cite{GL_original},
\begin{equation}
\frac{H_0^w}{\tanh \left( \frac{H_0^w}{2 T} \right) } = 2T + 2T \sum_{n=1}^{\infty} \left( 1 - \frac{c^2 n^2}{\left( H_0^w \right)^2 + c^2 n^2 } \right), \label{eq:step1_3}
\end{equation}
with $c = 2 \pi T$.  Applying resolvent identities, we obtain a decomposition of $H_0^w/\tanh(\beta H_0^w/2)$ of the form
\begin{equation}
\frac{H_0^w}{\tanh\left( \frac{\beta H_0^w}{2} \right)} = \mathds{1}_{\mathbb{C}^2} K_{T}^{\Delta_0} + A + B,
\label{eq:step1_5}
\end{equation}
where the operators $A$ and $B$ are given by
\begin{align}
A &= \sum_{n=1}^{\infty} \frac{1}{E^2 + c^2 n^2} \left( H_0 h^2 \omega + h^2 \omega H_0 + h^4 \omega^2  \right) \frac{1}{E^2 + c^2 n^2}, \label{eq:step1_6} \\
B &= -\sum_{n=1}^{\infty} \frac{1}{E^2 + c^2 n^2} \left( H_0 h^2 \omega + h^2 \omega H_0 + h^4 \omega^2  \right) \frac{1}{\left( H_0^w \right)^2 + c^2 n^2} \nonumber \\
&\ \ \hspace{4cm} \times \left( H_0 h^2 \omega + h^2 \omega H_0 + h^4 \omega^2  \right) \frac{1}{E^2 + c^2 n^2}, \nonumber
\end{align}
and
\begin{equation}
\omega = \begin{pmatrix} W & 0 \\ 0 & -W \end{pmatrix}.
\end{equation}
To obtain the above result, we used on the one hand that $x \mapsto x/\tanh(x/2T)$ is an even function of $x$ and on the other hand that $H_0^2 = \mathds{1}_{\mathbb{C}^2} E^2$ holds.

Since we will have to deal with $\Gamma - \Gamma_0^w$ frequently, we introduce the following notation:
\begin{equation}
Q = \Gamma - \Gamma_0^w, \quad \quad \Lambda = \alpha - \alpha_0^w \quad \text{ and } \quad q = \gamma - \gamma_0^w. \label{eq:step1_7}
\end{equation}
When we explicitly evaluate the trace over the $\mathbb{C}^2$-matrix structure in the term $\text{Tr}\left[ Q K_T^{\Delta_0} Q \right]$ and use $K_T^{\Delta_0} \geq 2T$ as well as $K_T^{\Delta_0} \geq C(T)(1 + p^2)$, we arrive at the following lower bound for the BCS functional:
\begin{align}
\mathcal{F}_{\beta}(\Gamma,\Gamma_0^w) &\geq \int_{\mathbb{R}^3} \left( \Lambda, \left( K_{T,x}^{\Delta_0} + V_y \right) \Lambda \right)_{L^2(\mathbb{R}^3,\text{d}x)} \text{d}y + C \left\Vert q \right\Vert_{H^1(\mathbb{R}^6)}^2 \label{eq:step1_17} \\
&\hspace{2cm} + \frac{1}{2} \text{Tr} \left[ Q (A+B) Q \right] + \frac{2}{5 \beta} \text{Tr} \left( \alpha \overline{\alpha} - \alpha_0^w \overline{\alpha_0^w} \right)^2 \nonumber \\
&\hspace{2cm} + 2 \text{Re} \int_{\mathbb{R}^6} V \left( \frac{x-y}{h} \right) \Lambda(x,y)  \overline{\tilde{\alpha}_0^w}(x,y) \text{d}(x,y). \nonumber
\end{align}
In Eq.~\eqref{eq:step1_17} we write $V_y(x) = V(x-y)$ and $\tilde{\alpha}_0^w(x,y) = \alpha_0^w(x,y) - h^{-3} \alpha_0\left( \frac{x-y}{h} \right)$. The subscript $x$ in the operator $K_{T,x}^{\Delta_0}$ tells us that it is acting on the $x$-component of the function $\Lambda(x,y)$. The remaining part of this section is devoted to the question which norms related to $\alpha$ can be controlled by the first term on the right hand side of Eq.~\eqref{eq:step1_17}.

The ground state of the operator $K_{T_c,x}^{0}+V_y$ is unique by assumption and $V \in L^2(\mathbb{R}^3)$ guarantees the existence of a spectral gap between the ground state and the first excited state which we denote by $\kappa > 0$. Since $\Vert \hat{\Delta}_0 \Vert_{L^{\infty}(\mathbb{R}^3)} = O(h)$, perturbation theory tells us that the second eigenvalue of the operator $K_T^{\Delta_0} + V_y$ can be bounded from below by $\kappa/2$. Our assumptions also tell us that $K_{T,x}^{\Delta_0} + V_y$ is a positive operator and that the only elements in its kernel are functions of the form $\alpha_0\left( \frac{x-y}{h} \right) \psi(y)$. Let us introduce the abbreviation $\alpha_0 \psi$ to denote the operator $\hat{\alpha}_0( -ih \nabla) \psi(x)$.  We insert the decomposition $\alpha = \alpha_0 \psi + \xi_0$ into the first term on the right-hand side of Eq.~\eqref{eq:step1_17} and obtain
\begin{align}
& \int_{\mathbb{R}^3} \left( \alpha_0 \psi + \xi_0 - \alpha_0^w, \left( K_{T,x}^{\Delta_0} + V_y \right) \left( \alpha_0 \psi + \xi_0 - \alpha_0^w \right) \right)_{L^2(\text{d}x)} \text{d}y \label{eq:step1_21} \\
&\hspace{6cm} \geq C_1 \left\Vert \xi_0 \right\Vert_{H_x^1(\mathbb{R}^6)}^2 - C_2 \left\Vert \xi_0 \right\Vert_{H_x^1(\mathbb{R}^6)} \left\Vert \tilde{\alpha}_0^w \right\Vert_{H_x^1(\mathbb{R}^6)} \nonumber \\
&\hspace{6cm} \geq C_1' \left\Vert \xi_0 \right\Vert_{H_x^1(\mathbb{R}^6)}^2 - C_2' h^{3/2}. \nonumber
\end{align}
By $\left\Vert \cdot \right\Vert_{H_x^1(\mathbb{R}^6)}$ we denote that part of the $H^1(\mathbb{R}^6)$-norm where the derivatives with respect to the $y$-coordinate are dropped. To obtain the result we used $c  (1+(hp)^2) \leq K_T^{\Delta_0} \leq C (1+(hp)^2)$ as well as Lemma~\ref{lem:properties_of_alpha0_W}. 

On the other hand, the left hand side of Eq.~\eqref{eq:step1_21} can be bounded from below in terms of \newline $h^2 \left\Vert (\nabla_x + \nabla_y)\Lambda \right\Vert_{L^2(\mathbb{R}^6)}^2$ as the following Lemma shows:
\begin{lemma}
\label{lem:lower_bound_K_T_V}
Let $\Lambda \in H^1(\mathbb{R}^6)$ be a symmetric function [$\Lambda(x,y)=\Lambda(y,x)$]. Then there exists a constant $C>0$ such that
\begin{equation}
\int_{\mathbb{R}^3} \left( \Lambda, \left( K_T^{\Delta_0} + V_y \right) \Lambda\right)_{L^2(\text{d}x)} \text{d}y \geq C h^2 \int_{\mathbb{R}^6} \left| \left( \nabla_x + \nabla_y \right) \Lambda(x,y) \right|^2 \text{d}(x,y). \label{eq:step1_22}
\end{equation}
\end{lemma}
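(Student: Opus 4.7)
My plan is to reduce the inequality to a uniform family of one-parameter operator inequalities parametrized by the center-of-mass momentum, via a coordinate change plus partial Fourier transform. First, I pass to center-of-mass and rescaled relative coordinates $z = (x+y)/2$, $u = (x-y)/h$, setting $M(u,z) := \Lambda(z + hu/2,\, z - hu/2)$. A direct computation gives $(\nabla_x + \nabla_y)\Lambda = \nabla_z M$, so that, after accounting for the Jacobian $h^3$, the right-hand side of the lemma equals $Ch^2\iint |\nabla_z M|^2 h^3\,du\,dz$. The identity $-ih\nabla_x = -i\nabla_u - \tfrac{ih}{2}\nabla_z$ yields $K_T^{\Delta_0} = F(-i\nabla_u - \tfrac{ih}{2}\nabla_z)$ with $F(p) = E(p)/\tanh(E(p)/(2T))$, and the potential becomes $V_y(x) = V(u)$. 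Crucially, the symmetry $\Lambda(x,y)=\Lambda(y,x)$ translates into $M(-u,z) = M(u,z)$. Taking the partial Fourier transform in $z$ with dual variable $P$ reduces the lemma to showing
\[
L_P \;:=\; F(-i\nabla_u + hP/2) + V(u) \;\geq\; C h^2|P|^2
\]
as an operator inequality on the subspace of functions of $u$ that are even, uniformly in $P \in \mathbb{R}^3$.

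For the low-momentum regime $|hP|\le\delta$, I use perturbation theory around $L_0 = K_T^{\Delta_0} + V$. By Remark~\ref{remark3}, $L_0 \ge 0$ has non-degenerate kernel $\mathbb{C}\alpha_0$ with spectral gap at least $\kappa/2$ above it. The key parity input is that $\nabla F$ is odd (since $F$ is even), so the first-order contribution $\tfrac{hP}{2}\cdot\nabla F(-i\nabla_u)$ has vanishing matrix elements between any two even functions. Hence the leading non-trivial contribution is of order $h^2|P|^2$: the diagonal-in-$\alpha_0$ expansion
\[
\langle \alpha_0, L_P \alpha_0\rangle \;=\; \tfrac{h^2}{8}\int |\hat\alpha_0(k)|^2\,P \cdot \nabla^2 F(k)\,P\,dk + O(h^3|P|^3)
\]
is strictly positive because $\hat\alpha_0$ concentrates near the Fermi sphere where $F$ attains its minimum and $\nabla^2 F$ is positive semi-definite (radially strictly positive and tangentially zero, giving a positive multiple of the identity after averaging). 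Writing a general even $\tilde M = \psi \alpha_0 + \phi$ with $\phi\perp\alpha_0$, I combine (i) the diagonal bound $\langle\alpha_0, L_P\alpha_0\rangle\ge c h^2|P|^2$; (ii) the estimate $\langle\phi, L_P\phi\rangle\ge c\|\phi\|^2$ obtained from $\langle\phi, L_0\phi\rangle\ge\kappa\|\phi\|^2$ plus an $\varepsilon$-form-absorption of $L_P-L_0$ into $L_0$, using the ellipticity $F(p)\ge c(1+|p|^2)$; and (iii) the cross-term bound $|\langle\alpha_0, L_P\phi\rangle|\le C h^2|P|^2\|\phi\|$, whose linear part again vanishes by the odd-parity cancellation. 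An AM--GM step then yields $\langle \tilde M, L_P \tilde M\rangle \ge c h^2|P|^2\|\tilde M\|^2$ for $|hP|\le\delta$.

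For the high-momentum regime $|hP|\ge\delta$, I use the ellipticity $F(p)\ge c(1+|p|^2)$ stated in the text together with parity. Expanding $(-i\nabla_u + hP/2)^2 = -\Delta_u - ihP\cdot\nabla_u + h^2|P|^2/4$ and using that $\langle\tilde M, -i\nabla_u\tilde M\rangle = 0$ for $\tilde M$ even in $u$, I obtain
\[
\langle\tilde M, L_P\tilde M\rangle \;\ge\; c\|\tilde M\|^2 + c\|\nabla_u\tilde M\|^2 + \tfrac{c h^2|P|^2}{4}\|\tilde M\|^2 + \langle\tilde M, V(u)\tilde M\rangle.
\]
By Assumption~\ref{assumption1}, $V$ is $-\Delta_u$-form bounded with relative bound zero, so the potential term is absorbed into the kinetic and mass contributions, leaving $\langle \tilde M, L_P \tilde M\rangle \ge c'h^2|P|^2\|\tilde M\|^2$ whenever $|hP|\ge\delta$. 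Combining the two regimes and integrating over $P$, or equivalently undoing the partial Fourier transform, returns the claimed inequality.

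The main obstacle I anticipate is the low-momentum analysis: establishing strict positivity of the Hessian integral $\int|\hat\alpha_0|^2\,\nabla^2 F\,dk$ (which relies on the concentration of $\hat\alpha_0$ near the Fermi sphere, where $F$ is locally convex) and controlling the cross term finely enough — via the $\nabla F$-odd-parity cancellation — so that the bilinear form on $\mathrm{span}(\alpha_0)\oplus\alpha_0^\perp$ is bounded below by $c h^2|P|^2\|\tilde M\|^2$ rather than the weaker $c h^2|P|^2|\psi|^2 + c\|\phi\|^2$. Matching constants across the boundary $|hP|=\delta$ to obtain a single $P$-uniform estimate is a further bookkeeping step.
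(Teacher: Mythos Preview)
Your approach is essentially the one in \cite[Lemma~3]{GL_original} to which the paper defers: pass to center-of-mass/relative coordinates, fiber-decompose in the center-of-mass momentum $P$, and prove the operator inequality $L_P\ge C h^2|P|^2$ on even functions by splitting into small and large $|hP|$. The parity observation (that on even functions the unsymmetrized $L_P$ coincides with $\tfrac12[F(-i\nabla_u+hP/2)+F(-i\nabla_u-hP/2)]+V$, so the linear-in-$P$ piece drops out) is exactly the mechanism used there, and your block decomposition $\tilde M=\psi\alpha_0+\phi$ with the gap on $\alpha_0^\perp$ is the right way to organize the small-$|hP|$ regime.

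The one genuine gap is your justification of (i), the strict positivity of
\[
A_{ij}=\int_{\mathbb{R}^3}|\hat\alpha_0(k)|^2\,\partial_{ij}F(k)\,\mathrm{d}k .
\]
Your heuristic (``$\hat\alpha_0$ concentrates near the Fermi sphere where $\nabla^2F\ge 0$'') is not a proof and is in fact misleading: writing $F(k)=g(|k|^2)$ one has $\nabla^2F=4|k|^2g''\,\hat k\hat k^T+2g'\,I$, and since $g$ has its minimum at $s=\mu$ one gets $g'(|k|^2)<0$ for $|k|^2<\mu$, so the tangential eigenvalues of $\nabla^2F$ are \emph{negative} throughout the interior of the Fermi ball. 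Thus $\nabla^2F$ is not positive semidefinite on the support of $\hat\alpha_0$, and positivity of $A$ is not a consequence of concentration. This strict positivity (equivalently, of the Ginzburg--Landau kinetic coefficient) is established in \cite{GL_original} by a separate argument exploiting the convexity of $g(s)=(s-\mu)/\tanh((s-\mu)/(2T))$ in $s$ together with the eigenvalue equation for $\alpha_0$; for the present operator $K_T^{\Delta_0}$ one reduces to the $K_{T_c}^0$ case by continuity, using $\|\hat\Delta_0\|_\infty=O(h)$ and $T=T_c(1-Dh^2)$. You should invoke that result rather than the heuristic. A related omission is the intermediate range $\delta\le|hP|\le R$: your ellipticity argument only gives $L_P\ge c\,h^2|P|^2-C$, which is useful for $|hP|$ large; the bounded range is closed by continuity of the lowest eigenvalue once strict positivity for all $q\neq 0$ is in hand.
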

\begin{proof}
The proof goes along the same lines as the one given in \cite[Lemma 3]{GL_original}.
\end{proof}

To make a connection between the decomposition $\alpha = \alpha_0 \psi + \xi_0$ and Lemma~\ref{lem:lower_bound_K_T_V}, we consult Eq.~\eqref{eq:ch3setupD3}. Using integration by parts and $\alpha_0(-x) = \alpha_0(x)$, which is assured by our assumptions, we find
\begin{equation}
\nabla \psi(y) = \frac{\int_{\mathbb{R}^3} \alpha_0 \left( \frac{x-y}{h} \right) \left[ (\nabla_x + \nabla_y) \alpha(x,y) \right] \text{d}x}{\left\Vert \alpha_0 \right\Vert_{L^2(\mathbb{R}^3)}^2}. \label{eq:step1_23} 
\end{equation}
Next, we integrate Eq.~\eqref{eq:step1_23} over $y$ and apply Schwarz's inequality once which yields
\begin{equation}
\left\Vert \nabla \psi \right\Vert_{L^2(\mathbb{R}^3)}^2 \leq h^3 \frac{\left\Vert \left( \nabla_x + \nabla_y \right) \alpha(x,y) \right\Vert_{L^2(\mathbb{R}^6)}^2}{\left\Vert \alpha_0 \right\Vert_{L^2(\mathbb{R}^3)}^2}. \label{eq:step1_24}
\end{equation}
Together with Lemma~\ref{lem:lower_bound_K_T_V} and Lemma~\ref{lem:properties_of_alpha0_W}, Eq.~\eqref{eq:step1_24} implies
\begin{equation}
\int_{\mathbb{R}^3} \left( \Lambda, \left( K_T^{\Delta_0} + V_y \right) \Lambda\right)_{L^2(\text{d}x)} \text{d}y \geq h^{-1} C_1 \left\Vert \alpha_0 \right\Vert_{L^2(\mathbb{R}^3)}^2 \left\Vert \nabla \psi \right\Vert^2 - C_2 h^{5}. \label{eq:step1_26}
\end{equation}
To also control the $L^2(\mathbb{R}^6)$-norm of $h \nabla_y \xi_0(x,y)$, we use the triangle inequality and estimate
\begin{align}
&\left\Vert (\nabla_x + \nabla_y) \alpha(x,y) \right\Vert_{L^2(\mathbb{R}^6)}^2 \geq \frac{1}{2} \left\Vert (\nabla_x + \nabla_y) \xi_0(x,y) \right\Vert_{L^2(\mathbb{R}^6)}^2 \label{eq:step1_27} \\
&\hspace{8cm} - \left\Vert \alpha_0 \right\Vert_{L^2(\mathbb{R}^3)}^2 \left\Vert \nabla \psi \right\Vert_{L^2(\mathbb{R}^3)}^2. \nonumber
\end{align} 
When we apply \cite[Proposition~5.6]{Frank_Lemm} which tells us that $\left\Vert \alpha_0 \right\Vert_{L^2(\mathbb{R}^3)} \lesssim h$ and put the results of this paragraph together, we finally arrive at
\begin{equation}
\int_{\mathbb{R}^3} \left( \Lambda, \left( K_T^{\Delta_0} + V_y \right) \Lambda\right)_{L^2(\text{d}x)} \text{d}y \geq C_1 \left( h \left\Vert \nabla \psi \right\Vert_{L^2(\mathbb{R}^3)}^2 + \left\Vert \xi_0 \right\Vert_{H^1(\mathbb{R}^6)}^2 \right) - C_2 h^{3/2}. \label{eq:step1_28} 
\end{equation}
Using the symmetry of $\Lambda$, that is $\Lambda(x,y) = \Lambda(y,x)$, and $V(x-y)=V(y-x)$, we can write the first term on the right hand side of Eq.~\eqref{eq:step1_17} with $K_{T,x}^{\Delta_0} + V_y$ replaced by $K_{T,y}^{\Delta_0} + V_x$ (now both operators act on the $y$-component of $\Lambda(x,y)$). Let us write $\alpha=\psi \alpha_0 + \xi_1$ where $\psi \alpha_0$ denotes the operator $\psi(x) \hat{\alpha}_0( -ih \nabla)$ and go through the whole analysis until Eq.~\eqref{eq:step1_28} again. We will find a similar bound as Eq.~\eqref{eq:step1_28} with $\xi_0$ replaced by $\xi_1$. Both decompositions of $\alpha$ we have used so far are not symmetric in $x$ and $y$. In a last step we write $\alpha=\left( \alpha_0 \psi + \psi \alpha_0 \right)/2 + \xi$ and realize that $\xi$ is given by $\xi = \left( \xi_0 + \xi_1 \right)/2$. This implies $\left\Vert \xi \right\Vert_{H^1(\mathbb{R}^6)} \leq ( \left\Vert \xi_0 \right\Vert_{H^1(\mathbb{R}^6)} + \left\Vert \xi_1 \right\Vert_{H^1(\mathbb{R}^6)} )/2$, which means that we can also obtain the $H^1(\mathbb{R}^6)$-norm of $\xi$ in our lower bound.

Insertion of our results into Eq.~\eqref{eq:step1_17} gives 
\begin{align}
&\mathcal{F}_{\beta}(\Gamma,\Gamma_0^w)  \geq C_1 \left( h \left\Vert \nabla \psi \right\Vert_{L^2(\mathbb{R}^3)}^2 + \left\Vert \xi \right\Vert_{H^1(\mathbb{R}^6)}^2 + \left\Vert \xi_0 \right\Vert_{H^1(\mathbb{R}^6)}^2 + \left\Vert q \right\Vert_{H^1(\mathbb{R}^6)}^2 \right) + \frac{1}{2} \text{Tr} \left[ Q (A+B) Q \right] \nonumber \\
&\hspace{1.0cm} + \frac{2}{5 \beta} \text{Tr} \left( \alpha \overline{\alpha} - \alpha_0^w \overline{\alpha_0^w} \right)^2 + 2 \text{Re} \int_{\mathbb{R}^6} V \left( \frac{x-y}{h} \right) \Lambda(x,y)  \overline{\tilde{\alpha}_0^w}(x,y) \text{d}(x,y) - C_2 h^{3/2}, \label{eq:step1_29}
\end{align}
which holds for $h$ small enough. The reason why we keep both, $\xi$ and $\xi_0$ is due to technical reasons and will become apparent in step~2 and step~3. This ends the first step.
\subsection{Step 2: Bounds on the remaining non-positive terms} 
\label{subsec:lower_bound_step2}
In step~2 we derive bounds for the remaining non-positive terms, that is, the fifth term in the first line on the right-hand side of Eq.~\eqref{eq:step1_29} and the second term in the second line of the same equation.

Let us define the operator $\tilde{A}$ by
\begin{equation}
\tilde{A} = h^2 \sum_{n=1}^{\infty} 2 c^2 n^2 \frac{1}{E^2 +c^2 n^2} \left( \omega H_0 + H_0 \omega \right) \frac{1}{E^2 + c^2 n^2} \label{eq:step2_1}
\end{equation}
The matrix elements of $\tilde{A}$ will be denoted with $\tilde{a}_{ij}$ for $i,j=1,2$. In Fourier space, their integral kernels are given by
\begin{align}
\tilde{a}_{11}(p,q) &= h^2 \hat{W}(p-q) \left[ k(hp) + k(hq) \right] \zeta(p,q), \label{eq:step2_3} \\
\tilde{a}_{12}(p,q) &= h^2 \hat{W}(p-q) \left[ \hat{\Delta}_0(hp) - \hat{\Delta}_0(h q) \right] \zeta(p,q), \nonumber
\end{align}
where
\begin{equation}
\zeta(p,q) = \sum_{n=1}^{\infty} \frac{2 c^2 n^2}{E(hp)^2 + c^2 n^2} \frac{1}{E(hq)^2 + c^2 n^2}. \label{eq:step2_3b}
\end{equation}
Since $\omega^2$ is a positive operator we have 
\begin{equation}
\text{Tr} \left[ QAQ \right] \geq \text{Tr} \left[ Q \tilde{A} Q \right] = 2 \text{Tr} \left[ q \tilde{a}_{11} q \right] + 2 \text{Tr} \left[ \overline{\Lambda} \tilde{a}_{11} \Lambda \right] + 4 \text{Re} \text{Tr} \left[ \Lambda \tilde{a}_{12} q \right]. \label{eq:step2_2}
\end{equation}
The following Lemma summarizes some properties of the matrix elements of $\tilde{A}$ that we will need for our estimates.
\begin{lemma}
\label{lem:properties_of_zeta}
The operators $\tilde{a}_{11}$ and $\tilde{a}_{12}$ are bounded from $L^2(\mathbb{R}^3)$ to $L^2(\mathbb{R}^3)$ and obey the estimates
\begin{align}
\left\Vert (1+x^2) \tilde{a}_{11} (1+x^2) \right\Vert_{\infty} &\lesssim h^2, \\
\left\Vert (1+x^2) \tilde{a}_{12} (1+x^2) \right\Vert_{\infty} &\lesssim h^3. \nonumber
\end{align}
By $\left\Vert \cdot \right\Vert_{\infty}$ we denote the operator-norm of $\mathcal{L}(L^2(\mathbb{R}^3))$.
\end{lemma}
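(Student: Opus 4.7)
My plan is to work in momentum representation and apply Schur's test after writing $(1+x^2) = 1 - \Delta_p$ on the Fourier side, so that the integral kernel of $(1+x^2) \tilde a_{ij} (1+x^2)$ in momentum variables is $(1-\Delta_p)(1-\Delta_q)\, \tilde a_{ij}(p,q)$. Since $\tilde a_{ij}(p,q)$ is a product of $\hat W(p-q)$, a ``slowly varying'' factor involving $k(hp)+k(hq)$ or $\hat{\Delta}_0(hp)-\hat{\Delta}_0(hq)$, and the Matsubara sum $\zeta(p,q)$, the objective is that every derivative in $(1-\Delta_p)(1-\Delta_q)$ either hits $\hat W(p-q)$, producing a derivative of $\hat W$ in the variable $p-q$ that will be absorbed into the $L^1$-integrability of the Schur integral, or hits a factor depending on $hp,hq$, producing an extra positive power of $h$ by the chain rule.

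The first substantive step is to control $\zeta$. Using the classical partial-fraction identity
\[
\zeta(p,q) \;=\; \frac{1}{T}\cdot \frac{g(E(hp)^2)-g(E(hq)^2)}{E(hp)^2 - E(hq)^2}, \qquad g(u) \;=\; \frac{\sqrt{u}}{\tanh(\sqrt{u}/(2T))},
\]
together with the fact that $g \in C^\infty([0,\infty))$ with bounded derivatives of all orders, I would show that $\zeta$ is a uniformly smooth and bounded function of $(hp,hq)$, and moreover that the difference-quotient form yields the asymptotic decay $\zeta(p,q) \lesssim (1+E(hp)+E(hq))^{-1}$. In particular $(k(hp)+k(hq))\,\zeta(p,q)$ is uniformly bounded in $(p,q,h)$. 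Combined with the estimate $\|\hat{\Delta}_0\|_{L^\infty}\leq \|\Delta_0\|_{L^1} \leq 2\|V\|_{L^2}\|\alpha_0\|_{L^2} \lesssim h$, which uses $\Delta_0=2V\alpha_0$, Cauchy-Schwarz, and Proposition 5.6 in Frank-Lemm, this gives $|[\hat\Delta_0(hp)-\hat\Delta_0(hq)]\,\zeta(p,q)| \lesssim h$ uniformly.

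Next I would Leibniz-expand $(1-\Delta_p)(1-\Delta_q)$ applied to $\tilde a_{ij}(p,q)$. Every derivative $\partial_{p_k}$ or $\partial_{q_k}$ that falls on $\hat W(p-q)$ simply differentiates $\hat W$ in the variable $p-q$; every derivative that falls on $k(h\cdot)+k(h\cdot)$, $\hat\Delta_0(h\cdot)-\hat\Delta_0(h\cdot)$, or $\zeta$ produces an extra factor of $h$ by the chain rule, since all those depend on $p,q$ only through $hp,hq$. Combining this with Step 1, with analogous uniform-in-$(p,q,h)$ bounds on derivatives of $\zeta$ (obtained either by termwise differentiation of the Matsubara series, balancing polynomial growth of $\partial^\alpha E(\xi)^2$ against the higher powers of $(E^2+c^2n^2)^{-1}$, or via the closed form), and with the bounds $\|\partial^\alpha \hat\Delta_0\|_{L^\infty} \leq \|x^\alpha \Delta_0\|_{L^1} \lesssim h$ that follow from Assumption 1 and the Frank-Lemm decomposition of $\alpha_0$, each term in the expansion is majorised by
\[
h^{2+\delta_{ij}} \,|\partial^\alpha \hat W(p-q)|\, B(hp,hq),
\]
with $\|B\|_{L^\infty}\lesssim 1$ and with $\delta_{11}=0$, $\delta_{12}=1$. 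After the substitution $t=p-q$, the Schur integral is bounded by $h^{2+\delta_{ij}}\|B\|_\infty \|\partial^\alpha \hat W\|_{L^1(\mathbb{R}^3)}$, which is finite by the assumption $\hat W \in W^{4,1}(\mathbb{R}^3)$.

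The main technical obstacle is the verification of the uniform-in-$h$ bounds on the higher derivatives of $\zeta$. Naive termwise differentiation of the Matsubara sum produces polynomially growing numerators from $\partial^\alpha E(\xi)^2$, and one must exploit the faster polynomial decay of the denominators or, alternatively, the boundedness of all derivatives of $g$ together with a careful analysis of the mixed partial derivatives of the difference quotient on the right-hand side of the closed-form identity, especially in the diagonal regime $E(hp)\approx E(hq)$, where one must view $\zeta$ as $T^{-1}\int_0^1 g'(sE(hp)^2+(1-s)E(hq)^2)\,ds$ and differentiate under the integral. Once this uniform smoothness of $\zeta$ is in hand, everything else reduces to careful bookkeeping of $h$-powers and Leibniz terms.
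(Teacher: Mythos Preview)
Your approach is correct and follows essentially the same route as the paper: establish a closed form for $\zeta$, prove uniform boundedness of $(k(hp)+k(hq))\zeta$ and of the derivatives of $\zeta$ up to order four, and then bound the operator norm by a Schur/Young-type estimate exploiting the $L^1$-integrability of $\hat W$ and its derivatives. The only presentational differences are that the paper obtains the closed form via the Poisson summation formula rather than your partial-fraction identity, and it disposes of the $W^{4,\infty}$-bound on $\zeta$ by direct termwise differentiation of the Matsubara series (using $|\partial_{p_i}^m (E(p)^2+c^2 n^2)^{-1}| \lesssim (E(p)^2+c^2 n^2)^{-1}$), which is precisely the first of the two alternatives you list for your ``main technical obstacle''.
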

\begin{proof} 
Let the function $f$ be given by $f(x) = \frac{x^2}{(a^2 + x^2)(b^2 + x^2)}$. Its Fourier transform can be computed explicitly and reads \cite[p.~448]{Gradshteyn}
\begin{equation}
\hat{f}(k) = \frac{\pi}{a^2-b^2} \left( a e^{-a 2\pi |k|} - b e^{-b 2\pi |k|} \right). 
\label{eq:step2_5}
\end{equation}
Using the Poisson summation formula, see e.g. \cite{Stein_Fourieranalysis_Eucl_Spaces}, we find
\begin{equation}
\sum_{n=1}^{\infty} \frac{n^2}{(a^2 + n^2)(b^2 + n^2)} = -\frac{\pi}{2(a+b)} + \frac{\pi}{a^2-b^2} \left( \frac{a}{1-e^{-2\pi a}} - \frac{b}{1-e^{-2\pi b}} \right).
\label{eq:step2_6}
\end{equation}
Hence, the kernel $\zeta(p,q)$ can be written as
\begin{align}
\zeta(p,q) &= \frac{1}{E(p)+E(q)} \bigg\lbrace -\frac{\pi}{c^2} + \frac{2 \pi/c}{E(p)/c - E(q)/c}  \label{eq:step2_4} \\
&\hspace{6cm} \times \bigg[ \frac{E(p)/c}{1-e^{-2 \pi E(p)/c}} - \frac{E(q)/c}{1-e^{-2 \pi E(q)/c}} \bigg] \bigg\rbrace. \nonumber
\end{align}
Next, we will show that $\left\Vert (1+p^2) \zeta(p,q) \right\Vert_{L^{\infty}(\mathbb{R}^6)} < \infty$ holds. It is not hard to see that the $L^{\infty}(\mathbb{R}^6)$-norm of the term in the curly brackets on the right hand side of Eq.~\eqref{eq:step2_4} can be bounded by
\begin{equation}
\frac{\pi}{c} + 4 \pi + 2 \pi \left\Vert e^{-\beta E(p)} E(p) \frac{e^{-\beta (E(q) - E(p))} - 1}{E(q)-E(p)} \right\Vert_{L^{\infty}(\mathbb{R}^6)} < \infty. \label{eq:step2_8}
\end{equation}
Let us split $\left\Vert (1+p^2) \zeta(p,q) \right\Vert_{L^{\infty}(\mathbb{R}^6)}$ into two parts. If $p^2 \leq 2 \mu$ the function $E(p)$ may have zeros if $\hat{\Delta}_0(p) = 0$ and $p^2 = \mu$. On this set, we use $1+p^2 \leq 1 + 2 \mu$, which together with the series representation of $\zeta(p,q)$, yields the desired estimate. On the other hand, for $p^2 > 2\mu$ we can use Eq.~\eqref{eq:step2_8} and the obvious estimate 
\begin{equation}
\left\Vert \chi_{ B^c_{\sqrt{2\mu}} }(p) \frac{1+p^2}{E(p) + E(q)} \right\Vert_{L^{\infty}(\mathbb{R}^6)} < \infty, \label{eq:step2_9}
\end{equation}
where $\chi_{ B^c_{\sqrt{2\mu}} }(p)$ denotes the characteristic function of the complement of the ball with radius $\sqrt{2 \mu}$, to prove the claim. We also claim that $\left\Vert \zeta(p,q) \right\Vert_{W^{4,\infty}(\mathbb{R}^6)} < \infty$ holds. This can easily be seen by going back to the definition of $\zeta(p,q)$ as an infinite sum. Our assumption on $V$ imply that $| \partial_{p_i}^m (E(p)^2 + c^2 n^2)^{-1} | \lesssim (E(p)^2 + c^2 n^2)^{-1}$ for $i=1,2,3$ and $1 \leq m \leq 4$ which is enough to prove the claim. 

To keep our proof short, we only estimate the operator norm of the operator defined by the kernel $\tilde{a}_{11}(p,q) = \hat{W}(p-q) \left[ k(hp) + k(hq) \right] \zeta(hp,hq)$. The proof with the additional $(1+x^2)$-factors goes along the same lines.
\begin{align}
&\left( \int_{\mathbb{R}^3} \left| \int_{\mathbb{R}^3} h^2 \hat{W}(p-q) \left[ k(hp) + k(hq) \right] \zeta(hp,hq) \Psi(q) \text{d}q \right|^2 \right)^{1/2} \label{eq:step2_10} \\
&\hspace{0.5cm} \leq \left\Vert \left[ k(hp) + k(hq) \right] \zeta(hp,hq) \right\Vert_{L^{\infty}(\mathbb{R}^6)} \left( \int_{\mathbb{R}^3} \left( \int_{\mathbb{R}^3} \left| h^2 \hat{W}(p-q) \Psi(q) \right| \text{d}q \right)^2 \text{d}p \right)^{1/2} \nonumber \\
&\hspace{0.5cm}  \lesssim h^2  \left\Vert \Psi \right\Vert_{L^2(\mathbb{R}^3)}. \nonumber
\end{align}
The boundedness of $\left\Vert \left[ k(hp) + k(hq) \right] \zeta(hp,hq) \right\Vert_{L^{\infty}(\mathbb{R}^6)}$ is assured by the identity $\zeta(p,q)=\zeta(q,p)$ and the boundedness of $\left\Vert (1+p^2) \zeta(p,q) \right\Vert_{L^{\infty}(\mathbb{R}^6)}$.
\end{proof}
Using Lemma~\ref{lem:properties_of_zeta}, we will now estimate the terms on the right hand side of Eq.~\eqref{eq:step2_2}. Let us start with the first one. We have
\begin{equation}
\left| \text{Tr} \left[ q \tilde{a}_{11} q \right] \right| \leq \left\Vert \tilde{a}_{11} \right\Vert_{\infty} \left\Vert q \right\Vert_2^2 \lesssim h^2 \left\Vert q \right\Vert_2^2. \label{eq:step2_13} \\
\end{equation}
Here and in the following, we denote by $\left\Vert T \right\Vert_p = \left( \text{Tr} \left( T^{*} T \right)^{p/2} \right)^{1/p}$, $p \geq 1$, the $p$-th Schatten class-norm of the operator $T$. In order to estimate the second term on the right-hand side of Eq.~\eqref{eq:step2_2}, we use the decomposition $\alpha = \alpha_0 \psi + \xi_0$ and write 
\begin{align}
\text{Tr} \left[ \overline{\Lambda} \tilde{a}_{11} \Lambda \right] &= \text{Tr} \left[ \overline{\alpha_0 \varphi} \tilde{a}_{11} \alpha_0 \varphi \right] + 2 \text{Re} \text{Tr} \left[ \overline{\alpha_0 \varphi} \tilde{a}_{11} \left( \xi_0 - \tilde{\alpha}_0^w \right) \right] \label{eq:step2_14} \\
&\hspace{3.25cm} + \text{Tr} \left[ \left( \xi_0 - \tilde{\alpha}_0^w \right) \tilde{a}_{11} \left( \xi_0 - \tilde{\alpha}_0^w \right) \right], \nonumber
\end{align}
where $\varphi(x) = \psi(x) - 1$. We use the non-symmetric decomposition of $\alpha$ because like this we have less terms to estimate.
The last term on the right-hand side of Eq.~\eqref{eq:step2_14} can be estimated like the one in Eq.~\eqref{eq:step2_13}, which together with Lemma~\ref{lem:properties_of_alpha0_W} gives
\begin{equation}
\left| \text{Tr} \left[ \left( \xi_0^* + \overline{\tilde{\alpha}}_0^{w} \right) \tilde{a}_{11} \left( \xi_0 + \tilde{\alpha}_0^w \right) \right] \right| \lesssim h^2 \left( \left\Vert \xi_0 \right\Vert_{L^2(\mathbb{R}^6)}^2 + h^{3/2} \right). \label{eq:step2_15} 
\end{equation}
The terms with $\alpha_0 \varphi$ have to be estimated differently because we will not be able to control the $L^2(\mathbb{R}^3)$-norm of $\varphi$. Using the positive terms in Eq.~\eqref{eq:step1_29}, we will be able to dominate a term of the form $\int_{\mathbb{R}^3} \left( \left| \psi(x) \right|^2 - 1 \right) g(x) \text{d}x$ where $g(x)$ is a reasonably localized function. Since we will encounter expressions like this frequently in the following we introduce the notation $\Phi(x) = \left| \psi(x) \right|^2 - 1$. To bring the second term on the right-hand side of Eq.~\eqref{eq:step2_14} in this form, we compute
\begin{align}
&\left| 2 \text{Re} \text{Tr} \left[ \overline{\varphi} \alpha_0 \tilde{a}_{11} \left( \xi_0 - \tilde{\alpha}_0^w \right) \right] \right| \leq 2 \left| \text{Tr} \left[ \left( \frac{\overline{\varphi(x)}}{1+x^2} \right) \hat{\alpha}_0(-ih \nabla) \left( 1+ x^2 \right) \tilde{a}_{11} \left( \xi_0 - \tilde{\alpha}_0^w \right) \right] \right| \label{eq:step2_16} \\
&\hspace{5.4cm} + 2 \left| \text{Tr} \left[ \left( \frac{\overline{\varphi(x)}}{1+x^2} \right) \left[x^2, \hat{\alpha}_0(-ih \nabla) \right] \tilde{a}_{11} \left( \xi_0 - \tilde{\alpha}_0^w \right) \right] \right|. \nonumber
\end{align}
Since all other contributions can be treated similarly we only estimate the first of the two terms in the above equation. We have
\begin{align}
&\left| \text{Tr} \left[ \left( \frac{\overline{\varphi(x)}}{1+x^2} \right) \hat{\alpha}_0(-ih \nabla) \left( 1+ x^2 \right) \tilde{a}_{11} \left( \xi_0 - \tilde{\alpha}_0^w \right) \right] \right| \\
&\hspace{5cm} \leq \left\Vert \frac{\overline{\varphi(x)}}{1+x^2} \hat{\alpha}_0(-ih \nabla) \right\Vert_2 \left\Vert (1+x^2) \tilde{a}_{11} \right\Vert_{\infty} \left( \left\Vert \xi_0 \right\Vert_2 + \left\Vert \tilde{\alpha}_0^w \right\Vert_2 \right) \nonumber
\end{align}
To estimate the term proportional to $\varphi$, we use the Seiler-Simon inequality, see e.g. \cite[Theorem~4.1]{Simon_trace_ideals}. Together with the estimate $\left\Vert \hat{\alpha}_0 \right\Vert_{H^2(\mathbb{R}^3)} \lesssim h$, which follows from \cite[Proposition~5.6]{Frank_Lemm} and Lemma~\ref{lem:properties_of_alpha_0_1}, this gives
\begin{equation}
\left| 2 \text{Re} \text{Tr} \left[ \overline{\varphi} \alpha_0 \tilde{a}_{11} \left( \xi_0 - \tilde{\alpha}_0^w \right) \right] \right| \lesssim h^{3/2} \left( \int_{\mathbb{R}^3} \frac{|\varphi(x)|^2}{1+x^4} \text{d}x \right)^{1/2}  \left( \left\Vert \xi_0 \right\Vert_2 + \left\Vert \tilde{\alpha}_0^w \right\Vert_2 \right). \label{eq:step2_17} 
\end{equation}
We are not yet done, but close as the following estimate shows:
\begin{equation}
\int_{\mathbb{R}^3} \frac{\left| \varphi(x) \right|^2}{1+x^4}\text{d}x \leq 2 \int_{\mathbb{R}^3} \frac{\Phi(x)}{1+x^4}\text{d}x + 2 \left( \int_{\mathbb{R}^3} \frac{1}{1+x^4} \text{d}x \right) + 2 \left( \int_{\mathbb{R}^3} \frac{1}{1+x^4} \text{d}x \right)^{1/2}. \label{eq:step2_18}
\end{equation}
Insertion of Eq.~\eqref{eq:step2_18} into Eq.~\eqref{eq:step2_17} together with an application of Lemma~\ref{lem:properties_of_alpha0_W} yields
\begin{align}
\left| 2 \text{Re} \text{Tr} \overline{\varphi} \alpha_0 \tilde{a}_{11} \left( \xi_0 - \tilde{\alpha}_0^w \right) \right|  \lesssim h \int_{\mathbb{R}^3} \left| \hat{\Phi}(p) \widehat{\frac{1}{1+(\cdot)^4}} (p) \right| \text{d}p + h^2 \left\Vert \xi_0 \right\Vert_2^2 + h, \label{eq:step2_19}
\end{align}
where by $\widehat{\frac{1}{1+(\cdot)^4}} (p)$ we denote the Fourier transform of the function $x \mapsto (1+x^4)^{-1}$. All remaining terms can be estimated using the above ideas. We obtain
\begin{equation}
\left| \text{Tr} \left[ Q \tilde{A} Q \right] \right|  \lesssim h \int_{\mathbb{R}^3} \left| \hat{\Phi}(p) \widehat{ \frac{1}{1+(\cdot)^4} }(p) \right| \text{d}p + h^2 \left( \left\Vert q \right\Vert_2^2 + \left\Vert \xi_0 \right\Vert_2^2 \right) + h. \label{eq:step2_23}
\end{equation}
The estimates for $\text{Tr} \left[ Q B Q \right]$ go along the same lines and we leave them to the reader. We find 
\begin{align}
\left| \text{Tr} Q B Q \right| \lesssim h^3 \left[ \left( \int_{\mathbb{R}^3} \left| \hat{\Phi}(p) \widehat{\frac{1}{1+(\cdot)^4}}(p) \right| \text{d}p \right) + 1 \right] + h^4 \left( \left\Vert q \right\Vert_2^2 +  \left\Vert \xi_0 \right\Vert_2^2  \right) + h^{3}.
\label{eq:step2_29}
\end{align}
This ends the construction of a bound for the term $\text{Tr} \left[ Q(A+B)Q \right]$.

The remaining non-positive term to estimate is the last term on the right-hand side of Eq.~\eqref{eq:step1_29}. We use the decomposition $\alpha=\alpha_0 \psi + \xi_0$ and only consider the term proportional to $\alpha_0 \varphi$.
\begin{align}
&\Bigg| \int_{\mathbb{R}^6} V\left( \frac{x-y}{h} \right) h^{-3} \alpha_0\left( \frac{x-y}{h} \right) \varphi(y) \overline{\tilde{\alpha}}_0^w(x,y) \text{d}(x,y) \Bigg| =  \left| \text{Tr} \left[ \Delta_0 \frac{\varphi}{1+x^2} (1+x^2) \overline{ \tilde{\alpha}}_0^w  \right] \right|  \label{eq:step2_30} \\
&\hspace{3cm} \lesssim h^{-1/2} \left[ \left( \int_{\mathbb{R}^3} \left| \hat{\Phi}(p) \widehat{\frac{1}{1+(\cdot)^4}}(p) \right| \text{d}p \right)^{1/2} + 1 \right] \left\Vert (1+x^2) \tilde{\alpha}_0^w \right\Vert_2 \nonumber \\
&\hspace{3cm} \lesssim h \int_{\mathbb{R}^3} \left| \hat{\Phi}(p) \widehat{\frac{1}{1+(\cdot)^4}}(p) \right| \text{d}p + h. \nonumber
\end{align}
We used Eq.~\eqref{eq:step2_18} to come to the second line and Lemma~\ref{lem:properties_of_alpha0_W_b} to come to the last line.

When we put our estimates for the non-positive terms in Eq.~\eqref{eq:step1_29} together, that is, Eq.~\eqref{eq:step2_29} and Eq.~\eqref{eq:step2_30}, we arrive at the following lower bound for the BCS functional: 
\begin{align}
\mathcal{F}_{\beta}(\Gamma,\Gamma_0^w) &\geq C_1 \left( h \left\Vert \nabla \psi \right\Vert_{L^2(\mathbb{R}^3)}^2 +  \left\Vert \xi \right\Vert_{H_1(\mathbb{R}^6)}^2 + \left\Vert \xi_0 \right\Vert_{H_1(\mathbb{R}^6)}^2 + \left\Vert q \right\Vert_{H^1(\mathbb{R}^6)}^2  \right) \label{eq:step2_31} \\
&\ \ + \frac{2}{5 \beta} \text{Tr} \left( \alpha \overline{\alpha} - \alpha_0^w \overline{\alpha}_0^w \right)^2 - h C_2 \int_{\mathbb{R}^3} \left| \hat{\Phi}(p) \widehat{ \frac{1}{1+(\cdot)^4} }(p) \right| \text{d}p - C_3 h. \nonumber
\end{align}
Eq.~\eqref{eq:step2_31} holds for appropriately chosen constants $C_1, C_2, C_3 > 0$ and $h$ small enough. This ends the second step.
\subsection{Step 3: Construction of the lower bound and a-priori estimates} 
\label{subsec:lower_bound_step3}
The construction of the lower bound starting from Eq.~\eqref{eq:step2_31} needs one crucial ingredient - estimates of the form $\left\Vert \nabla \psi \right\Vert_{L^2(\mathbb{R}^3)}^2 \lesssim [ 1+ \Vert \hat{\Phi} \Vert_{L^2(B_r)} ]$ and $\left\Vert \xi \right\Vert_{L^2(\mathbb{R}^6)}^2 \lesssim h [ 1+ \Vert \hat{\Phi} \Vert_{L^2(B_r)} ]$ for some appropriately chosen $r > 1$. By $B_r$ we denote the ball of radius $r$ centered around zero and as above $\Phi = | \psi |^2-1$. We will see that the condition $\mathcal{F}_{\beta}(\Gamma,\Gamma_0^w) \leq 0$ is strong enough to guarantee such bounds. The estimates imply a separation of scales with respect to the decomposition $\alpha = ( \alpha_0 \psi + \psi \alpha_0 )/2 + \xi$. On the other hand, they allow us to show that the part of $-h C_2 \int_{\mathbb{R}^3} \left| \hat{\Phi}(p) \widehat{ \frac{1}{1+(\cdot)^4} }(p) \right| \text{d}p$ containing the high Fourier modes of $\hat{\Phi}$ can be controlled by $\left\Vert \nabla \psi \right\Vert_{L^2(\mathbb{R}^3)}^2$ and that the contribution coming from the low Fourier modes is dominated by $\frac{2}{5 \beta} \text{Tr} \left( \alpha \overline{\alpha} - \alpha_0^w \overline{\alpha}_0^w \right)^2$. We start with the following technical lemma which will allow us to obtain the estimates we are heading for:
\begin{lemma}
Let $g : \mathbb{R}^3 \to \mathbb{R}_+$ be a measurable function, $C_1 > 0$ and assume that for some $\beta \in (2,\infty)$ one has $\left\Vert (1+|p|^{\beta}) g \right\Vert_{L^1(\mathbb{R}^3)} + \left\Vert (1+|p|^{\beta}) g \right\Vert_{L^{\infty}(\mathbb{R}^{3})} < \infty$. Then there exists a number $R>0$ such that for all $r \geq R$ one has
\begin{equation}
\int_{\mathbb{R}^3} \left| \nabla \psi(x) \right|^2 \text{d}x - C_1 \int_{B_{r}^c} \left| \hat{\Phi}(p) \right| g(p) \text{d}p \geq  - C(r),
\label{eq:step3_1}
\end{equation}
where $\Phi = \left| \psi \right|^2 - 1$. The constant $C(r)>0$ satisfies $\lim_{r \rightarrow \infty} C(r) = 0$.
\label{lem:step3_lem1}
\end{lemma}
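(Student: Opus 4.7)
The plan is to introduce the auxiliary real-valued function $\rho := |\psi|-1$. Since $|\psi|^2 = (1+\rho)^2$, we have $\Phi = 2\rho + \rho^2$, hence $\hat\Phi = 2\hat\rho + \widehat{\rho^2}$, and the diamagnetic inequality $|\nabla\rho| = \bigl|\nabla|\psi|\bigr| \leq |\nabla\psi|$ gives $\|\nabla\rho\|_{L^2} \leq \|\nabla\psi\|_{L^2}$. The goal is then to estimate the two Fourier pieces $\int_{B_r^c}|\hat\rho|\,g\,dp$ and $\int_{B_r^c}|\widehat{\rho^2}|\,g\,dp$ separately, using the decay of $g$.

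For the linear piece, integration by parts in Fourier yields $|p|\,|\hat\rho(p)| \lesssim |\widehat{\nabla\rho}(p)|$, so by Cauchy--Schwarz
\[
\int_{B_r^c}|\hat\rho(p)|\,g(p)\,dp \lesssim \|\nabla\rho\|_{L^2}\,\bigl\|g/|p|\bigr\|_{L^2(B_r^c)} \leq \|\nabla\psi\|_{L^2}\,\varepsilon_1(r),
\]
where $\varepsilon_1(r) := \|g/|p|\|_{L^2(B_r^c)} \to 0$ as $r\to\infty$ by dominated convergence (the assumption $g \lesssim (1+|p|^\beta)^{-1}$ gives $g/|p| \in L^2(\mathbb{R}^3)$). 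For the quadratic piece I use $\rho^2 \geq 0$ and the elementary bound $|\widehat{\rho^2}(p)| \leq \|\rho^2\|_{L^1} = \|\rho\|_{L^2}^2$; combined with the pointwise identity $|\rho| = |\Phi|/(|\psi|+1) \leq |\Phi|$ this yields
\[
\int_{B_r^c}|\widehat{\rho^2}(p)|\,g(p)\,dp \leq \|\Phi\|_{L^2}^2\,\|g\|_{L^1(B_r^c)},
\]
and $\|g\|_{L^1(B_r^c)} \to 0$ as $r\to\infty$, again by dominated convergence since $g \in L^1(\mathbb R^3)$ (which follows from the hypothesis on $(1+|p|^\beta)g$).

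Combining both bounds and applying Young's inequality $xy \leq \eta x^2 + y^2/(4\eta)$ with $\eta = 1/2$ to the linear contribution yields
\[
C_1\int_{B_r^c}|\hat\Phi|\,g\,dp \leq \tfrac{1}{2}\|\nabla\psi\|_{L^2}^2 + C(r),
\]
with $C(r)$ a sum of a term proportional to $C_1^2\,\varepsilon_1(r)^2$ and a term proportional to $C_1\,\|\Phi\|_{L^2}^2\,\|g\|_{L^1(B_r^c)}$, both tending to zero as $r \to \infty$; rearranging gives the claim for any sufficiently large $R$. The main technical point is that the quadratic estimate picks up the factor $\|\Phi\|_{L^2}^2$, so $C(r)$ is not universal but depends on the state. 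In the application in Step 3 this dependence is harmless: the factor $\|\Phi\|_{L^2}^2$ is in turn controlled by the positive trace term $\tfrac{2}{5\beta}\mathrm{Tr}(\alpha\bar\alpha-\alpha_0^w\overline{\alpha_0^w})^2$ already present in \eqref{eq:step2_31}, which can therefore absorb the offending contribution in the global bookkeeping of the lower bound.
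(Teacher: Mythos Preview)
Your linear estimate is fine, but the treatment of the quadratic piece $\widehat{\rho^2}$ has a genuine gap. Your bound $|\widehat{\rho^2}(p)|\le\|\rho\|_{L^2}^2\le\|\Phi\|_{L^2}^2$ makes $C(r)$ depend on $\|\Phi\|_{L^2}^2$, and this quantity is not controlled by the hypotheses of the lemma; it is precisely what the main theorem is trying to bound. The lemma, as it is used in Step~3, needs $C(r)$ to be \emph{independent of $\psi$}: it is applied to an arbitrary admissible state with $\mathcal{F}_\beta\le 0$ in order to produce the a-priori bounds \eqref{eq:step3_7}, and only \emph{after} those bounds are in hand can one extract $\|\hat\Phi\|_{L^2(B_r)}^2$ from the trace term via \eqref{eq:step3_10}--\eqref{eq:step3_14}. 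Your proposed fix is therefore circular: the trace term controls only the low-frequency part $\|\hat\Phi\|_{L^2(B_r)}^2$, and the estimates on the cross terms in \eqref{eq:step3_10} already rely on \eqref{eq:step3_7}. A further symptom: your argument uses only $g\in L^1$ and $g/|p|\in L^2$ (which needs merely $\beta>1/2$), whereas the hypothesis $\beta>2$ is essential in the paper and is not used anywhere in your proof.

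The paper avoids the $\|\rho\|_{L^2}$ trap by never invoking that norm. Instead it writes $\widehat{\rho^2}=\hat\rho\ast\hat\rho$, splits $\hat\rho=\hat\rho\,\chi_{B_r}+\hat\rho\,\chi_{B_r^c}$, and uses Young's inequality for convolutions together with
\[
\|\hat\rho\|_{L^1(B_r)}\le\Bigl\|\tfrac{1}{|p|}\Bigr\|_{L^2(B_r)}\|p\hat\rho\|_{L^2(B_r)}\lesssim r\,\|\nabla\psi\|_{L^2},
\qquad
\|\hat\rho\|_{L^2(B_r^c)}\le r^{-1}\|\nabla\psi\|_{L^2}.
\]
Every term in $\int_{B_r^c}|\hat\rho\ast\hat\rho|\,g$ is then bounded by a constant times $\|\nabla\psi\|_{L^2}^2$ with a prefactor like $r^2/(1+r^\beta)$, which tends to zero precisely because $\beta>2$. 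After completing the square against the positive $\|\nabla\psi\|_{L^2}^2$, what remains is a $\psi$-independent $C(r)\to 0$. That is the step your argument is missing.
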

\begin{proof} We start by expressing the above quantities in terms of $\widehat{|\psi|-1}$, the Fourier transform of $|\psi|-1$. This can be done because $(| \psi |^2 - 1) = (|\psi|-1)^2 + 2(|\psi|-1)$. An application of the triangle inequality yields
\begin{align}
\int_{B_r^c} \left| \hat{\Phi}(p) \right| g(p) \text{d}p &\leq \int_{B_r^c} \left| \widehat{|\psi|-1} \ast \widehat{|\psi|-1 } (p)\right| g(p) \text{d}p + 2 \int_{B_r^c} \left| \widehat{|\psi|-1} (p)\right| g(p) \text{d}p \nonumber \\
&\leq \frac{1}{1+r^{\beta}} \int_{B_r^c} \left| \widehat{(|\psi|-1)} \ast \widehat{(|\psi|-1)}(p) \left[ \left( 1+|p|^{\beta} \right) \right] g(p) \right| \text{d}p \nonumber \\
&\hspace{3cm} + 2 \left\Vert \widehat{(|\psi|-1)} \right\Vert_{L^2(B_r^c)} \left\Vert g \right\Vert_{L^2(B_r^c)}. \label{eq:step3_2}
\end{align}
To obtain a bound for the first two terms on the right-hand side of Eq.~\eqref{eq:step3_2}, we write all functions as a sum of one part living in $B_r$ and another one living in its complement. In other words, we insert $1 = \chi_{B_r}(p) + \chi_{B_r^c}(p)$ in front of each function where $\chi_{B_r}$ denotes the characteristic function of the ball with radius $r$ centered around zero. We note that $\Vert \widehat{(|\psi|-1)} \Vert_{L^1(B_r)} \leq \Vert \frac{1}{|p|} \Vert_{L^2(B_r)} \Vert p \ \widehat{(|\psi|-1)} \Vert_{L^2(B_r)}$, and therefore an application of Young's inequality tells us that 
\begin{align}
&\int_{B_r^c} \big| \widehat{(|\psi|-1)} \ast \widehat{(|\psi|-1)}(p) \left[ \left( 1+|p|^{\beta} \right) \right]  \hat{g}(p) \big| \text{d}p \label{eq:step3_3} \\
& \hspace{1.5cm} \lesssim \left\Vert \widehat{|\psi|-1} \right\Vert_{L^2(B_r^c)}^2 \left\Vert \left(1+|p|^{\beta}\right) \hat{g} \right\Vert_{L^1(B_r^c)} \nonumber \\
& \hspace{2cm} + \left\Vert \widehat{p (|\psi|-1)} \right\Vert_{L^2(B_r)}^2 \left\Vert \frac{1}{|p|} \right\Vert_{L^2(B_r)}^2 \left\Vert (1+|p|^{\beta}) \hat{g} \right\Vert_{L^{\infty}(B_r^c)} \nonumber \\
& \hspace{2cm} + \left\Vert \widehat{(|\psi|-1)} \right\Vert_{L^2(B_r^c)} \left\Vert \widehat{p \ (|\psi|-1)} \right\Vert_{L^2(B_r)} \left\Vert \frac{1}{|p|} \right\Vert_{L^2(B_r)} \left\Vert (1+|p|^{\beta}) \hat{g} \right\Vert_{L^2(B_r^c)}. \nonumber
\end{align} 
The gradient term on the other hand is bounded from below by
\begin{align}
\left\Vert \nabla \psi \right\Vert_{L^2(\mathbb{R}^3)}^2 &\geq \left\Vert \nabla (|\psi(x)|-1) \right\Vert_{L^2(\mathbb{R}^3)}^2 \label{eq:step3_4} \\
&\geq r^2 \left\Vert \widehat{(|\psi|-1)} \right\Vert_{L^2(B_r(0)^c))}^2 + \left\Vert \widehat{p (|\psi|-1)} \right\Vert_{L^2(B_r(0))}^2. \nonumber
\end{align}  
Since $\int_{B(r)} \frac{1}{|p|} \ \text{d}p = 2 \pi r^2$, we finally obtain
\begin{align}
\int_{\mathbb{R}^3} &\left| \nabla \psi(x) \right|^2 \text{d}x - C_1 \int_{B_r^c} \left|\hat{\Phi}(p) \right| g(p) \text{d}p \label{eq:step3_5} \\
&\hspace{2cm} \geq r^2 \left\Vert \widehat{(|\psi|-1)} \right\Vert_{L^2(B_r^c))}^2 + \left\Vert \widehat{p \ (|\psi|-1)} \right\Vert_{L^2(B_r))}^2 \nonumber \\
&\hspace{3.5cm} - 2 C_1 \left\Vert \widehat{(|\psi|-1)} \right\Vert_{L^2(B_r^c)} \left\Vert g \right\Vert_{L^2\left(B_r^c\right)} \nonumber \\
&\hspace{3.5cm} - \frac{C_1 C_2 }{1+|r|^{\beta}} \bigg[ \left\Vert \widehat{(|\psi|-1)} \right\Vert_{L^2(B_r^c)}^2 \left\Vert (1+|p|^{\beta}) g \right\Vert_{L^1(B_r^c)} \nonumber \\
&\hspace{3.5cm} + \left\Vert \widehat{p (|\psi|-1)} \right\Vert_{L^2(B_r)}^2 r^2  \left\Vert (1+|p|^{\beta}) g \right\Vert_{L^{\infty}(B_r^c)} \nonumber \\
&\hspace{3.5cm} + \left\Vert \widehat{(|\psi|-1)} \right\Vert_{L^2(B_r^c)} \left\Vert \widehat{p (|\psi|-1)} \right\Vert_{L^2(B_r)} r \left\Vert (1+|p|^{\beta}) g \right\Vert_{L^2(B_r^c)} \bigg] \nonumber
\end{align}
for an appropriately chosen constant $C_2 > 0$. Choosing $\beta > 2$ and $r$ large enough, we easily see that the expression on the right-hand side behaves as claimed. 
\end{proof}
Choose $\Gamma$ such that $\mathcal{F}_{\beta}(\Gamma,\Gamma_0^w) \leq 0$. This is always possible because $\mathcal{F}_{\beta}(\Gamma_0^w, \Gamma_0^w) = 0$. An application of Lemma~\ref{lem:step3_lem1} in Eq.~\eqref{eq:step2_31} gives
\begin{equation}
0 \geq C_1 \left( h \left\Vert \nabla \psi \right\Vert_{L^2(\mathbb{R}^3)}^2 + \left\Vert \xi \right\Vert_{L^2(\mathbb{R}^6)}^2 \right) - h C_2 \int_{B_r} \left| \hat{\Phi}(p) \right| \text{d}p - C_2 h \label{eq:step3_6}
\end{equation}
for $r$ large enough and with appropriately chosen constants $C_1, C_2$ independent of $r$ and $h$. Eq.~\eqref{eq:step3_6} yields the bounds
\begin{align}
\left\Vert \xi \right\Vert_{L^2(\mathbb{R}^6)}^2 + \left\Vert \xi_0 \right\Vert_{L^2(\mathbb{R}^6)}^2 &\lesssim h \left( \left\Vert \hat{\Phi} \right\Vert_{L^2(B_r)} + 1 \right), \label{eq:step3_7} \\
\left\Vert \nabla \psi \right\Vert_{L^2(\mathbb{R}^3)}^2 &\lesssim \left\Vert \hat{\Phi} \right\Vert_{L^2(B_r)} + 1. \nonumber
\end{align}
We continue with Eq.~\eqref{eq:step2_31} and apply Lemma~\ref{lem:step3_lem1} another time which gives
\begin{align}
\mathcal{F}_{\beta}(\Gamma,\Gamma_0^w) &\geq C_1 \left( h \left\Vert \nabla \psi \right\Vert_{L^2(\mathbb{R}^3)}^2 + \left\Vert \xi \right\Vert_{H_1(\mathbb{R}^6)}^2 + \left\Vert \xi_0 \right\Vert_{L^2(\mathbb{R}^6)}^2 + \left\Vert q \right\Vert_{H^1(\mathbb{R}^6)}^2 \right) + \frac{2}{5 \beta} \text{Tr} \left( \alpha \overline{\alpha} - \alpha_0^w \overline{\alpha}_0^w \right)^2 \nonumber \\
&\hspace{5cm} - C_2 h \int_{B_r} \left| \hat{\Phi}(p) \right|\text{d}p - C_2 h, \label{eq:step3_9}
\end{align} 
for appropriately chosen constants $C_1,C_2 > 0$ independ of $r$ and $h$. We define the functions $\Phi_{\leq}$ and $\Phi_{>}$ to be the inverse Fourier transforms of $\hat{\Phi}(p) \chi_{ \left\lbrace |x| \leq r \right\rbrace }(p)$ and $\hat{\Phi}(p) \chi_{ \left\lbrace |x|> r \right\rbrace }(p)$, respectively. When we insert $\alpha = \alpha_0 \psi + \xi_0$ into the last term on the right-hand side of Eq.~\eqref{eq:step3_9} and use $\text{Tr} \left[ \alpha_0 \Phi_{\leq} \alpha_0^2 \Phi_{>} \alpha_0 \right] = 0$, which results from $\hat{\Phi}_{\leq}(p)$ and $\hat{\Phi}_{>}(p)$ having disjoint support, we obtain
\begin{align}
&\text{Tr} \left[ \left( \alpha \overline{\alpha} - \alpha_0^w \overline{\alpha}_0^w \right)^2 \right]  \label{eq:step3_10} \\
&\hspace{1.5cm}\geq \text{Tr} \left[ \left(\alpha_0 \Phi_{\leq} \alpha_0 \right)^2 \right] + 2 \text{Re} \text{Tr} \left[ \alpha_0 \Phi_{\leq} \alpha_0 \left( \xi_0 \xi_0^* + \alpha_0 \psi \xi_0^* + \xi_0 \overline{\psi} \alpha_0 -\alpha_0^w \overline{\alpha_0^w} + \alpha_0^2 \right) \right]. \nonumber
\end{align}
The first term on the right-hand side of Eq.~\eqref{eq:step3_10} can be estimated by
\begin{align}
\text{Tr} \left[ \left(\alpha_0 \Phi_{\leq} \alpha_0 \right)^2 \right] &= h^{-3} \int_{\mathbb{R}^3} \left| \hat{\Phi}_{\leq}(p) \right|^2 \hat{\alpha}_0^2 \ast \hat{\alpha}_0^2(hp) \text{d}p \label{eq:step3_11} \\
&\geq \inf_{p \in B_r}\left( \hat{\alpha}_0^2 \ast \hat{\alpha}_0^2(hp) \right) h^{-3} \int_{B_r} \left| \hat{\Phi}_{\leq}(p) \right|^2 \text{d}p. \nonumber
\end{align} 
As the unique ground state of the real operator $K_T^{\Delta_0} + V$,  $\alpha_0$ as well as its Fourier transform are real functions. Since $\hat{\alpha}_0^2 \ast \hat{\alpha}_0^2(0) = \int_{\mathbb{R}^3} \hat{\alpha}_0(p)^4 \text{d}p = c > 0$, it can easily be seen with the help of \cite[Theorem~2.10]{Frank_Lemm} that $\inf_{p \in B_r}\left( \hat{\alpha}_0^2 \ast \hat{\alpha}_0^2(hp) \right) \geq h^4 c/2$ for $hr$ small enough. 

With the help of Eq.~\eqref{eq:step3_7} we can easily estimate the remaining terms on the right-hand side of Eq.~\eqref{eq:step3_10}. Using the fact that lower Schatten-class norms dominate higher ones, we find
\begin{align}
\left| \text{Tr} \left[ \alpha_0 \Phi_{\leq} \alpha_0 \xi_0 \xi_0^* \right] \right| &\leq \left\Vert \alpha_0 \Phi_{\leq} \alpha_0 \right\Vert_2 \left\Vert \xi_0 \right\Vert_4^2 \label{eq:step3_12} \\
&\lesssim h^{3/2} \left\Vert \hat{\Phi}_{\leq} \right\Vert_{L^2(\mathbb{R}^3)}^2 + h^{3/2}. \nonumber
\end{align}
With Young's inequality, one similarly finds 
\begin{align}
\left| \text{Tr} \left[ \alpha_0 \Phi_{\leq} \alpha_0 \left( \alpha_0 \psi \xi_0^* + \xi_0 \overline{\psi} \alpha_0 \right) \right] \right| &\lesssim \left\Vert \alpha_0 \Phi_{\leq} \alpha_0 \right\Vert_2 \left\Vert \alpha_0 \psi \right\Vert_{\infty} \left\Vert \xi_0 \right\Vert_2 \label{eq:step3_13} \\
&\lesssim h^{3/2} \left( \left\Vert \hat{\Phi}_{\leq} \right\Vert_{L^2(\mathbb{R}^3)}^2 + 1 \right). \nonumber
\end{align}
On the other hand, an application of Lemma~\ref{lem:properties_of_alpha0_W} yields
\begin{align}
\left| \text{Tr} \left[ \alpha_0 \Phi_{\leq} \alpha_0 \left( -\alpha_0^w \overline{\alpha_0^w} + \alpha_0^2 \right) \right] \right| &\leq \left\Vert \alpha_0 \Phi_{\leq} \alpha_0 \right\Vert_2 \left( \left\Vert -\alpha_0^w \overline{\alpha_0^w} + \alpha_0 \overline{\alpha}_0^w \right\Vert_2 + \left\Vert - \alpha_0 \overline{\alpha}_0^w + \alpha_0^2 \right\Vert_2 \right) \nonumber \\
&\lesssim h^{2} \left\Vert \hat{\Phi}_{\leq} \right\Vert_{L^2(\mathbb{R}^3)}. \label{eq:step3_14}
\end{align}
When we insert the above findings into Eq.~\eqref{eq:step3_10} and afterwards insert the resulting expression into Eq.~\eqref{eq:step3_9}, we arrive at
\begin{align}
\mathcal{F}_{\beta}(\Gamma,\Gamma_0^w) &\geq C_1 \Bigg( h \left\Vert \nabla \psi \right\Vert_{L^2(\mathbb{R}^3)}^2 + h \left\Vert \widehat{|\psi|^2-1} \right\Vert_{L^2(B_r)}^2 + \left\Vert \xi \right\Vert_{H_1(\mathbb{R}^6)}^2  \label{eq:step3_15} \\
&\hspace{9cm} + \left\Vert q \right\Vert_{H^1(\mathbb{R}^6)}^2 \Bigg) - C_2 h. \nonumber
\end{align}
It remains to show that the $L^2(B_r)$-norm of $\left\vert \psi \right\vert^2 - 1$ can be replaced by the $L^2(\mathbb{R}^3)$-norm.

Let us introduce the function $\eta(x) = |\psi(x)|-1$. We have 
\begin{equation}
\Phi(x) = |\psi(x)|^2-1 = \eta(x)^2 - 2 \eta(x).  \label{eq:step3_16}
\end{equation}
Now we decompose $\eta$ into two parts, $\hat{\eta}(p) = \hat{\eta}_1(p) + \hat{\eta}_2(p)$ with $\hat{\eta}_1(p) = \hat{\eta}(p) \chi_{B_s}(p)$ and $\hat{\eta}_2(p) = \hat{\eta}(p) \chi_{B_s^c}(p)$ for some number $s>0$. They satisfy the following bounds 
\begin{align}
\left\Vert \hat{\eta}_1 \right\Vert_{L^1(\mathbb{R}^3)} &\lesssim s^{1/2} \left \Vert \nabla \psi \right\Vert_{L^2(\mathbb{R}^3)}, \label{eq:step3_17} \\
\left\Vert \hat{\eta}_2 \right\Vert_{L^2(\mathbb{R}^3)} &\lesssim s^{-1} \left \Vert \nabla \psi \right\Vert_{L^2(\mathbb{R}^3)}. \nonumber
\end{align}
We also have 
\begin{equation}
\left\Vert \eta_2 \right\Vert_{L^4(\mathbb{R}^3)} \leq \left\Vert \eta_2 \right\Vert_{L^2(\mathbb{R}^3)}^{1/4} \left\Vert \eta_2 \right\Vert_{L^6(\mathbb{R}^3)}^{3/4} \lesssim s^{-1/4} \left\Vert \nabla \psi \right\Vert_{L^2(\mathbb{R}^3)}. \label{eq:step3_17b}
\end{equation}
Next we choose $s = r/2$ and estimate $\hat{\Phi}$ outside $B_r$ using that $\hat{\eta}_1$ has compact support as well as Eq.~\eqref{eq:step3_17} and Eq.~\eqref{eq:step3_17b}:
\begin{align}
\left\Vert \hat{\Phi} \right\Vert_{L^2(B_r^c)} &= \left\Vert \hat{\eta}_1 \ast \hat{\eta}_1 + \hat{\eta}_2 \ast \hat{\eta}_2 + 2 \hat{\eta}_1 \ast \hat{\eta}_2 - 2\hat{\eta}_1 - \hat{\eta}_2 \right\Vert_{L^2(B_r^c)} \label{eq:step3_18} \\
&\lesssim s^{-1/2} \left\Vert \nabla \psi \right\Vert_{L^2(\mathbb{R}^3)}^2 + s^{-1} \left\Vert \nabla \psi \right\Vert_{L^2(\mathbb{R}^3)}. \nonumber
\end{align}
Going back to Eq.~\eqref{eq:step3_11}, we see that we can choose $r = h^{-1/2}$ without changing the rest of the argument until Eq.~\eqref{eq:step3_15}. In particular, the function $\psi$ is such that $\left\Vert \nabla \psi \right\Vert_{L^2(\mathbb{R}^3)}$ is still bounded independently of $h$. With Eq.~\eqref{eq:step3_18}, we find
\begin{equation}
\left\Vert \hat{\Phi} \right\Vert_{L^2(B_r^c)}^2 \lesssim h^{1/2}.
\end{equation}
This concludes the proof of Theorem~\ref{thm:lower_bound_BCS}.
\begin{appendix}
\section{Properties of $\alpha_0$ and $\alpha_0^w$}
\label{Appendix}
In this section we establish some properties of the minimizer of the translation-invariant BCS functional $\alpha_0$ and of $\alpha_0^w$. We remind that for any minimizing pair $(\gamma_0,\alpha_0)$ of the BCS functional, $\alpha_0$ is a pointwise a.e. solution of the BCS gap equation, see \cite{BCS_general_pair_interaction}.
\begin{lemma}
Let $V \in L^{3/2}(\mathbb{R}^3)$, $k \in \mathbb{N}_0$ and let $\alpha_0$ with $\alpha_0 \not\equiv 0$ be a pointwise a.e. solution of the BCS gap equation, Eq.~\eqref{eq:ch3setupA4}. Then $V \in W^{k,\infty}(\mathbb{R}^3)$ implies $\alpha_0 \in H^{k+2}(\mathbb{R}^3)$ and $\hat{V} \in H^k(\mathbb{R}^3)$ implies $\hat{\alpha}_0 \in H^k(\mathbb{R}^3)$.
\label{lem:properties_of_alpha_0_1}
\end{lemma}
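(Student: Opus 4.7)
The plan is to prove the two implications separately by induction on $k$, using the BCS gap equation $K_T^{\Delta_0}\alpha_0 = -V\alpha_0$ and its Fourier-space form $K_T^{\Delta_0}(p)\hat{\alpha}_0(p) = -\widehat{V\alpha_0}(p)$. The key analytic input is the uniform bound $K_T^{\Delta_0}(p) = E(p)/\tanh(E(p)/(2T)) \geq c(1+|p|^2)$, which is immediate from $K_T^{\Delta_0}(p) \geq 2T$ and the quadratic growth of $E(p)$. Consequently, multiplication by $(1+|p|^2)/K_T^{\Delta_0}(p)$ is bounded, and $(K_T^{\Delta_0})^{-1}$ acts as a smoothing operator of the same strength as $(1-\Delta)^{-1}$.

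For the first implication ($V \in W^{k,\infty} \Rightarrow \alpha_0 \in H^{k+2}$), I induct on $k$. In the base case $V \in L^{\infty}$ together with $\alpha_0\in L^2$ give $V\alpha_0\in L^2$; combined with $(1+|p|^2)/K_T^{\Delta_0}(p)\in L^{\infty}$ this yields $\|(1+|p|^2)\hat{\alpha}_0\|_{L^2} \lesssim \|V\alpha_0\|_{L^2} < \infty$, i.e.\ $\alpha_0\in H^2$. For the inductive step, assume the result up to $k$ and take $V\in W^{k+1,\infty}$. Then $\alpha_0\in H^{k+2}$ by hypothesis, and the Leibniz rule gives $V\alpha_0\in H^{k+1}$ because each term $\partial^{\gamma}V\cdot\partial^{\beta-\gamma}\alpha_0$ with $|\beta|\leq k+1$ involves $\partial^{\gamma}V\in L^{\infty}$ and $\partial^{\beta-\gamma}\alpha_0\in L^2$. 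Multiplying the Fourier equation by $(1+|p|^2)^{(k+1)/2}$ and using the same $L^{\infty}$-bound then upgrades the statement to $\alpha_0\in H^{k+3}$.

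For the second implication ($\hat{V} \in H^k \Rightarrow \hat{\alpha}_0 \in H^k$), I work entirely in momentum space with $\hat{\alpha}_0 = -(K_T^{\Delta_0})^{-1}(\hat{V}\ast\hat{\alpha}_0)$, inducting on $k$ with trivial base case. In the step I differentiate $\beta$ times with $|\beta|\leq k$ by Leibniz and place every derivative of the convolution onto $\hat{V}$, obtaining $(\partial^{\beta-\gamma}\hat{V})\ast\hat{\alpha}_0$. Young's inequality bounds this in $L^2$ by $\|\partial^{\beta-\gamma}\hat{V}\|_{L^2}\|\hat{\alpha}_0\|_{L^1}$. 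The finiteness of $\|\hat{\alpha}_0\|_{L^1}$ comes from the first implication applied with $V\in L^{\infty}$, which gives $\alpha_0\in H^2$, together with Cauchy--Schwarz and $(1+|p|^2)^{-1}\in L^2(\mathbb{R}^3)$. The remaining ingredient is that $\partial^{\gamma}(K_T^{\Delta_0})^{-1}$ is in $L^{\infty}$: writing $K_T^{\Delta_0}(p) = F(E^2(p))$ with $F(u) = \sqrt{u}\coth(\sqrt{u}/(2T))$ smooth (even analytic) on $[0,\infty)$ and bounded below by $2T$, the chain rule reduces the task to $L^{\infty}$-control of mixed derivatives of $E^2(p) = (p^2-\mu)^2 + |\hat{\Delta}_0(p)|^2$, which follows from the smoothness of $\hat{\Delta}_0 = 2\widehat{V\alpha_0}$ provided by the first implication.

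The main obstacle I expect is verifying the $L^{\infty}$ bounds on $\partial^{\gamma}(K_T^{\Delta_0})^{-1}$ without circularity, since derivatives of $E^2$ feed in derivatives of $\hat{\Delta}_0$ that one would like to deduce from the conclusion one is trying to prove. The resolution is to bootstrap: first run the first implication to its full strength, which uses only $V$-regularity and produces $\alpha_0$ of arbitrarily high Sobolev regularity, then use this to secure as much smoothness of $\hat{\Delta}_0$ as needed for the second induction. A minor secondary issue is that the baseline assumption is only $V\in L^{3/2}$, so one has to justify the first step of the second induction carefully; combining the Sobolev embedding $H^1\hookrightarrow L^6$ in three dimensions with H\"older makes the integrability of $V\alpha_0$ routine.
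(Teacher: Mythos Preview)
Your treatment of the first implication is essentially the paper's: induction on $k$ via the gap equation, the bound $K_T^{\Delta_0}(p)\gtrsim 1+|p|^2$, and the Leibniz rule applied to $V\alpha_0$.

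For the second implication there is a genuine gap. You invoke the first implication twice: once to obtain $\hat{\alpha}_0\in L^1$ (needed for your Young estimate $\|(\partial^{\beta-\gamma}\hat V)\ast\hat\alpha_0\|_{L^2}\le\|\partial^{\beta-\gamma}\hat V\|_{L^2}\|\hat\alpha_0\|_{L^1}$), and once to ``bootstrap'' smoothness of $\hat\Delta_0$ so that $\partial^{\gamma}(K_T^{\Delta_0})^{-1}\in L^\infty$. But the two implications are stated independently; for the second one only $V\in L^{3/2}$ and $\hat V\in H^k$ are assumed, so no $W^{k,\infty}$--regularity of $V$ is available and $\alpha_0\in H^2$ (hence $\hat\alpha_0\in L^1$) is not granted. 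Moreover, even if the first implication were applicable, it produces Sobolev regularity of $\alpha_0$ (i.e.\ decay of $\hat\alpha_0$), which is not the same as smoothness of $\hat\Delta_0$ in the momentum variable; your bootstrap paragraph conflates these.

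The paper avoids both issues without any appeal to the first implication. Derivatives of $\hat\Delta_0$ are controlled directly from the hypothesis $\hat V\in H^k$ via
\[
\partial^{\gamma}\hat\Delta_0=-2(2\pi)^{-3/2}(\partial^{\gamma}\hat V)\ast\hat\alpha_0,\qquad
\|\partial^{\gamma}\hat\Delta_0\|_{L^\infty}\le c\,\|\partial^{\gamma}\hat V\|_{L^2}\,\|\hat\alpha_0\|_{L^2},
\]
which already makes the derivative of $(K_T^{\Delta_0})^{-1}$ harmless. For the convolution term one does \emph{not} use $L^2\ast L^1\to L^2$; instead one exploits that $(K_T^{\Delta_0})^{-1}\in L^2(\mathbb{R}^3)$ (since $K_T^{\Delta_0}\gtrsim 1+|p|^2$ in three dimensions) and estimates
\[
\Bigl\|\tfrac{1}{K_T^{\Delta_0}}\,(\partial^{\gamma}\hat V)\ast\hat\alpha_0\Bigr\|_{L^2}
\le \Bigl\|\tfrac{1}{K_T^{\Delta_0}}\Bigr\|_{L^2}\,\bigl\|(\partial^{\gamma}\hat V)\ast\hat\alpha_0\bigr\|_{L^\infty}
\le \Bigl\|\tfrac{1}{K_T^{\Delta_0}}\Bigr\|_{L^2}\,\|\partial^{\gamma}\hat V\|_{L^2}\,\|\hat\alpha_0\|_{L^2}.
\]
This yields $\|\partial_i\hat\alpha_0\|_{L^2}\le C(\|\hat V\|_{H^1})\,\|\hat\alpha_0\|_{L^2}$ using only $\hat V\in H^1$ and $\hat\alpha_0\in L^2$, and the higher-order case follows by the same mechanism. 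Your induction scheme is fine once you replace the $\hat\alpha_0\in L^1$ step and the bootstrap by these two observations.
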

\begin{proof}
The gap equation reads
\begin{equation}
K_T^{\Delta_0}(p) \hat{\alpha}_0(p) + (2 \pi)^{-3/2} \hat{V} \ast \hat{\alpha}_0(p) = 0.
\label{eq:appA_1}
\end{equation}
We can estimate the $L^2(\mathbb{R}^3)$-norm of derivatives of $\alpha_0$ as follows:
\begin{align}
\left\Vert \partial_i^m \alpha_0 \right\Vert_{L^2(\mathbb{R}^3)} &= (2\pi)^{-3/2} \left\Vert p_i^m \frac{1}{K_T^{\Delta_0}} \hat{V} \ast \hat{\alpha}_0 \right\Vert_{L^2(\mathbb{R}^3)} \leq \left\Vert \frac{p^2}{K_T^{\Delta_0}} \right\Vert_{L^{\infty}(\mathbb{R}^3)} \left\Vert p_i^{m-2} \hat{V} \ast \hat{\alpha}_0 \right\Vert_{L^2(\mathbb{R}^3)} \nonumber \\
&\lesssim \sum_{j=0}^{m-2} \left\Vert \partial_i^j V \right\Vert_{\infty} \left\Vert \partial_i^{m-2-j} \alpha_0 \right\Vert_{L^2(\mathbb{R}^3)}. \label{eq:appA_2}
\end{align}
To come to the last line, we used that $K_T^{\Delta_0} \lesssim (1+p^2)$. Eq.~\eqref{eq:appA_2} shows that $V \in W^{k,\infty}(\mathbb{R}^3)$ implies $\alpha_0 \in H^{k+2}(\mathbb{R}^3)$.

In order to obtain the second property, we again use Eq.~\eqref{eq:appA_1} together with the fact that in the second term, $\hat{\alpha}_0$ appears only in convolution with $\hat{V}$. Hence, derivatives act only on $\hat{V}$ and not on $\hat{\alpha}_0$. We compute
\begin{align}
\left\Vert \partial_i \hat{\alpha}_0(p) \right\Vert_{L^2(\mathbb{R}^3)} &\lesssim \left\Vert \frac{2 k(p) p_i +  \hat{\Delta}_0(p) \left( \partial_i \hat{\Delta}_0(p) \right) }{4 T E(p) \left( K_T^{\Delta_0}(p) \right)^2}  \left[ \frac{\frac{E(p)}{2T} - \frac{1}{2} \sinh\left( \frac{E(p)}{T} \right)}{E(p)^2 \cosh\left( \frac{E(p)}{2T} \right)^2 } \right] \hat{\Delta}_0(p) \right\Vert_{L^2(\mathbb{R}^3)}  \nonumber \\
&\hspace{0.5cm} + \left\Vert \frac{1}{K_T^{\Delta_0}(p)} \left( \partial_i \hat{V} \right)\ast \hat{\alpha}_0(p) \right\Vert_{L^2(\mathbb{R}^3)} \label{eq:appA_3} \\
&\leq C\left( \left\Vert \hat{V} \right\Vert_{H^1(\mathbb{R}^3)} \right) \left\Vert \hat{\alpha}_0 \right\Vert_{L^2(\mathbb{R}^3)}. \nonumber
\end{align}
To obtain the result, we used the identity $\hat{\Delta}_0(p) = -2 (2 \pi)^{-3/2} \hat{V} \ast \hat{\alpha}_0(p)$. Looking at the above terms, one can easily see that another differentiation does not change this structure, except that second derivatives of $\hat{V}$ appear. The extension to $k$ derivatives therefore is a simple exercise in differentiation.
\end{proof}
\begin{lemma}
Let $V \in H^1(\mathbb{R}^3) \cap W^{1,\infty}(\mathbb{R}^3)$, $\hat{V} \in L^1(\mathbb{R}^3)$ and $W \in H^1(\mathbb{R}^3) \cap W^{1,\infty}(\mathbb{R}^3)$. Then $\left\Vert \alpha_0^w(x,y) - h^{-3} \alpha_0 \left( \frac{x-y}{h} \right) \right\Vert_{H^1(\mathbb{R}^6)} \lesssim h^{3/2}$. 
\label{lem:properties_of_alpha0_W}
\end{lemma}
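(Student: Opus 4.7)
The plan is to interpret $\alpha_0^w - h^{-3}\alpha_0((x-y)/h)$ as the $(1,2)$-block of the operator difference $\Gamma_0^w - \Gamma_0$, where $\Gamma_0 = (1+e^{\beta H_0})^{-1}$ is the translation-invariant reference state obtained by setting $W=0$ in $H_0^w$, and then to control this difference by a Dyson-type perturbative expansion in the small parameter $h^2 \omega$, with $\omega = \mathrm{diag}(W,-W)$. The $(1,2)$-block of $\Gamma_0$ is the Fourier multiplier $\hat{\alpha}_0(-ih\nabla)$, whose integral kernel (up to a universal constant from the Fourier convention) is precisely $h^{-3}\alpha_0((x-y)/h)$, so the quantity to be bounded is indeed the $(1,2)$-block of $\Gamma_0^w - \Gamma_0$.

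For this I would use the contour integral representation
\begin{equation*}
\Gamma_0^w - \Gamma_0 = \frac{1}{2\pi i}\oint_{\gamma} f(z)\,(z-H_0^w)^{-1}\,h^2\omega\,(z-H_0)^{-1}\,\mathrm{d}z,
\end{equation*}
with $f(z)=(1+e^{\beta z})^{-1}$ and a contour $\gamma$ enclosing the real line, or equivalently the Matsubara representation $f(H) = \tfrac{1}{2} - \tfrac{1}{\beta}\sum_{n}H(H^2+\omega_n^2)^{-1}$ already exploited in Eq.~\eqref{eq:step1_3}. One further application of the resolvent identity splits the expansion into a leading piece with exactly one $h^2\omega$ between two $H_0$-resolvents and a remainder quadratic in $h^2\omega$. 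The crucial structural observation is that $(z-H_0)^{-1}$ has $(1,2)$-block $\hat{\Delta}_0(hp)/(z^2-E(hp)^2)$ while $\omega$ is block-diagonal with opposite signs, so expanding the $2\times 2$ matrix product and extracting the $(1,2)$-entry always leaves at least one factor of $\hat{\Delta}_0$ in each term of the leading piece. From $\hat{\Delta}_0 = -2(2\pi)^{-3/2}\hat{V}\ast\hat{\alpha}_0$ together with $\hat{V}\in L^1(\mathbb{R}^3)$ and $\|\hat{\alpha}_0\|_{L^2(\mathbb{R}^3)}\lesssim h$ (via Lemma~\ref{lem:properties_of_alpha_0_1} and \cite[Proposition~5.6]{Frank_Lemm}), this yields an extra $O(h)$.

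Each leading-order contribution has the form $A\cdot h^2W\cdot B$ with $A,B$ translation-invariant Fourier multipliers $\phi_{A,B}(-ih\nabla)$. I would estimate the Hilbert-Schmidt norm by
\begin{equation*}
\left\Vert A\,h^2W\,B \right\Vert_{2} \leq \left\Vert A \right\Vert_{\infty}\cdot h^2\left\Vert W \right\Vert_{L^2}\cdot h^{-3/2}\left\Vert \phi_B \right\Vert_{L^2},
\end{equation*}
the last factor arising from the substitution $q=hp$ in the $L^2$-norm of the integral kernel of $B$, and would always assign the $\hat{\Delta}_0$-carrying resolvent to the role of $B$ so that $\|\phi_B\|_{L^2}\lesssim h$. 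The three factors then combine to $h^2\cdot h^{-3/2}\cdot h = h^{3/2}$, matching the claim for the $L^2(\mathbb{R}^6)$-part. The $h\nabla_x$ and $h\nabla_y$ components of the $H^1$-norm are handled by commuting the derivative through $A$ or $B$: the modified symbols $(hp)\phi_{A,B}(hp)$ remain bounded thanks to the high-momentum decay of $(z^2-E(hp)^2)^{-1}$, so no further power of $h$ is lost. The quadratic and higher remainders of the expansion are dominated by placing every additional $W$-factor in operator norm, $\|h^2 W\|_{\infty}\leq h^2\|W\|_{L^\infty}$, producing strictly smaller contributions, and the Matsubara sum converges thanks to the decay $(\omega_n^2+E^2)^{-1}$ at large $n$.

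The main technical obstacle is the bookkeeping of the four matrix blocks together with the Matsubara sum: one has to ensure that, in every term produced by expanding the $(1,2)$-entry of the Dyson series, the $h^{-3/2}$-loss from the kernel $L^2$-norm is incurred at most once, while every other factor is placed either in operator norm or in a norm that produces an additional power of $h$ from $\hat{\Delta}_0$ or $\hat{\alpha}_0$. A judicious choice of which resolvent to estimate in $L^2$ and which in $L^\infty$, guided by the location of the $\hat{\Delta}_0$-factor in each summand, is required throughout.
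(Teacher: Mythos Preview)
Your strategy coincides with the paper's: represent $\tilde{\alpha}_0^w$ as the $(1,2)$-block of a contour (equivalently Matsubara) integral, expand the resolvent of $H_0^w$ to second order in $h^2\omega$, observe that every contribution to the off-diagonal block carries at least one factor of $\hat{\Delta}_0$, and estimate each piece in Hilbert--Schmidt norm via the Kato--Seiler--Simon inequality with the power count $h^2\cdot h^{-3/2}\cdot h = h^{3/2}$.

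Two remarks on where your sketch and the paper's execution diverge. First, your treatment of the quadratic remainder is slightly too quick: one of the four matrix terms arising there is $D(z+k)\,h^2W\,D(z+k)\,h^2W\,\big[(z-H_0^w)^{-1}\big]_{12}$, in which the $\hat{\Delta}_0$ factor is hidden inside the full resolvent rather than sitting in a Fourier multiplier, so your recipe of ``assign the $\hat{\Delta}_0$-carrying resolvent to the role of $B$'' only applies after a further expansion of $(z-H_0^w)^{-1}$, which the paper carries out explicitly. Second, the paper works on the horizontal contour $\mathrm{Im}\,z=\pm\pi/(2\beta)$ and discovers that one remainder term lacks sufficient decay in $\mathrm{Re}\,z$ for the contour integral to converge by naive estimates; it repairs this with the identity $g(\beta z)=1-g(-\beta z)$ together with a contour-shifting argument showing that the constant piece integrates to zero. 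Your Matsubara option, summing over $z=i\omega_n$, genuinely bypasses this complication: at purely imaginary frequencies every resolvent factor is bounded by a negative power of $|\omega_n|$ and the sum converges without any additional trick, so this is a modest simplification over the paper's route.
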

\begin{proof}
To prove the claim, we make use of a representation of the operator
\begin{equation}
\alpha_0^w = \left[ \left( 1+e^{\beta H_0^w} \right)^{-1} \right]_{12}
\label{eq:properties_of_a0_W_definition}
\end{equation}
in terms of a Cauchy integral. This is possible because the function $g(z) = \left( 1 + e^z \right)^{-1}$ is analytic in the strip $\left\lbrace z \in \mathbb{C} | \ | \text{Im}(z) | < \pi \right\rbrace$. Let  
\begin{equation}
\mathcal{C}_R = \left\lbrace r - i \pi/(2 \beta), r \in [-R,R] \right\rbrace \cup \left\lbrace -r + i \pi/(2 \beta), r \in [-R,R] \right\rbrace. \label{eq:properties_of_a0_W_1}
\end{equation}
Then
\begin{equation}
\alpha_0^w = \lim_{R \rightarrow \infty} \frac{1}{2 \pi i} \int_{\mathcal{C}_R} g(\beta z) \left[ \frac{1}{z-H_0^w} \right]_{12} \text{d}z,
\label{eq:properties_of_a0_W_2}
\end{equation}
where the limit $R \rightarrow \infty$ is to be taken in the weak operator topology. For further details on the construction of the above integral, see \cite[p.~696, p.~704]{GL_original}. In the following, we will often write $\lim_{R \rightarrow \infty}\int_{\mathcal{C}_R} = \int_{\mathcal{C}}$ with $\mathcal{C} = \cup_{R \geq 0} \ \mathcal{C}_R$ to denote the above limit. Let us write the resolvent of $H_0^w$ as
\begin{equation}
\frac{1}{z-H_0^w} = \frac{1}{z-H_0} + \frac{1}{z-H_0} h^2 \omega \frac{1}{z-H_0} + \frac{1}{z-H_0} h^2 \omega \frac{1}{z-H_0} h^2 \omega \frac{1}{z-H_0^w}. \label{eq:properties_of_a0_W_5}
\end{equation}
Since $\tilde{\alpha}_0^w(x,y) = \tilde{\alpha}_0^w(y,x)$, it is sufficient to consider a derivative acting on the $x$-component of this function. 

The resolvent of  $H_0$ can be computed explicitly and reads:
\begin{equation}
\frac{1}{z-H_0} = \frac{1}{(z-E)(z+E)} \begin{pmatrix} z+k & \Delta_0 \\ \Delta_0 & z-k \end{pmatrix}.
\label{eq:properties_of_a0_W_6}
\end{equation}
This implies
\begin{align}
&\left( \frac{1}{2 \pi i} \int_{\mathcal{C}} g(\beta z) \left[ \frac{1}{z-H_0} h^2 \omega \frac{1}{z-H_0} \right]_{12} \ \text{d}z \right)(p,q) \label{eq:properties_of_a0_W_8} \\
&\hspace{1cm} = \left[ \frac{\beta}{2} \frac{g_0(\beta E(hq)) - g_0(\beta E(hp))}{E(hp) - E(hq)} \right] \times \left[  h^2 \hat{W}(p-q) \frac{\hat{\Delta}(hq) k(hp) + \hat{\Delta}(hp) k(hq)}{E(hp) + E(hq)} \right], \nonumber
\end{align}
where $g_0(z) = \tanh(z/2)/z$. We have to bound the $L^2(\mathbb{R}^6)$-norm of $hp$ times this kernel, which can be done as follows: Take the $L^{\infty}(\mathbb{R}^6)$-norm of the term in the left bracket on the right hand side of the above equation. It can easily be seen to be bounded independently of $h$. On the other hand, the $L^2(\mathbb{R}^6)$-norm of $hp$ times the second bracket on the right hand side of Eq.~\eqref{eq:properties_of_a0_W_8} can be bounded by a constant times $h^{3/2}$.

To bound the contribution coming from the third term on the right-hand side of Eq.~\eqref{eq:properties_of_a0_W_5}, we first have to compute the upper right component of the operator-valued matrix under consideration.
\begin{align}
&\left[ \frac{1}{z-H_0} h^2 \omega \frac{1}{z-H_0} h^2 \omega \frac{1}{z-H_0^w} \right]_{12} \label{eq:properties_of_a0_W_11} \\
&\hspace{3cm} = \left[ D(z+k) h^2W D(z+k) - D \Delta_0 h^2 W D \Delta_0 \right] h^2 W \left[ \frac{1}{z-H_0^w} \right]_{12} \nonumber \\
&\hspace{3.5cm} - \left[ D(z+k) h^2 W D \Delta_0 - D \Delta_0 h^2 W D (z-k) \right] h^2 W \left[ \frac{1}{z-H_0^w} \right]_{22}. \nonumber 
\end{align}
For the sake of convenience, we have introduced the abbreviation $D = (z^2 - E^2)^{-1}$. We will only show how to bound the Cauchy integral of the first and the last terms on the right-hand side of Eq.~\eqref{eq:properties_of_a0_W_11}. All remaining terms can be bounded with similar arguments. Let us start with the contribution coming from the first term:
\begin{align}
&\left\Vert \left(-ih\nabla_j \right) D (z+k) h^2 W D(z+k) h^2W \left( \frac{1}{z-H_0^w} \right)_{12} \right\Vert_2 \label{eq:properties_of_a0_W_12} \\
&\hspace{4cm} \lesssim h^4 \left\Vert D(z+k) \right\Vert_{\infty}^2 \left\Vert W \right\Vert_{W^{1,\infty}(\mathbb{R}^3)} \left\Vert (-ih \nabla_j) W \left[ \frac{1}{z-H_0^w} \right]_{12} \right\Vert_2. \nonumber
\end{align}
The first factor on the right-hand side of Eq.~\eqref{eq:properties_of_a0_W_12} can be estimated by
\begin{equation}
\left\Vert D(z+k) \right\Vert_{\infty} \lesssim \begin{cases} 1 & \text{ for } r \gg 1, \\ \frac{1}{1+|r|} & \text{ for } r \ll -1 \end{cases} \label{eq:properties_of_a0_W_13}
\end{equation}
and has to be understood to hold on the contour $\mathcal{C}$. By $r$ we refer to the natural coordinates on $\mathcal{C}$. Let us note that $g(\beta z)$ decays exponentially for $r \gg 1$ and that it is bounded by $1$ for $r \ll -1$. It remains to give a bound on the Hilbert-Schmidt norm of the last factor on the right-hand side of Eq.~\eqref{eq:properties_of_a0_W_12}. We write $\left[ \frac{1}{z-H_0^w} \right]_{12} = \left[ \frac{1}{z-H_0} + \frac{1}{z-H_0} h^2 \omega \frac{1}{z-H_0^w} \right]_{12}$ and estimate
\begin{align}
&\left\Vert (-ih \nabla_j) W \left[ \frac{1}{z-H_0^w} \right]_{12} \right\Vert_2 \lesssim \left\Vert W \right\Vert_{H^1(\mathbb{R}^3)} \Bigg\lbrace h^{-3/2} \left\Vert p_j \hat{\Delta}_0(p) \right\Vert_{L^2(\mathbb{R}^3)} \frac{1}{1+|r|} \label{eq:properties_of_a0_W_14} \\
&\hspace{5cm} + h^2 \left\Vert p_j \frac{z+k(p)}{\left( z^2 - E(p)^2 \right)} \right\Vert_{L^{\infty}(\mathbb{R}^3)} \bigg\Vert W \left[ \frac{1}{z-H_0^w} \right]_{12} \bigg\Vert_2  \nonumber \\ 
&\hspace{5cm} + h^{1/2} \left\Vert W \right\Vert_{H^1(\mathbb{R}^3)} \left\Vert p_j \hat{\Delta}_0(p) \right\Vert_{L^2(\mathbb{R}^3)} \left\Vert \left[ \frac{1}{z-H_0^w} \right]_{22} \right\Vert_{\infty} \Bigg\rbrace. \nonumber
\end{align}
To estimate the second term on the right-hand side of Eq.~\eqref{eq:properties_of_a0_W_14}, we note that
\begin{equation}
\left\Vert \frac{p_j}{z-E(p)} \right\Vert_{L^{\infty}(\mathbb{R}^3)} \lesssim \begin{cases} |r|^{1/2} & \text{for } r \gg 1, \\ \left( \frac{1}{1+|r|} \right)^{1/2} & \text{for } r \ll -1. \end{cases} \label{eq:properties_of_a0_W_15}
\end{equation}
On the other hand, when we expand $\left( z - H_0^w \right)^{-1}$ another time and use the bound
\begin{equation}
\left\Vert \frac{1}{z-E(hp)} \right\Vert_{L^2(\mathbb{R}^3)} \lesssim h^{-3/2} \begin{cases} |r|^{1/4} & \text{for } r \gg 1, \\ \left( \frac{1}{1+|r|} \right)^{1/4} & \text{for } r \ll -1, \end{cases} \label{eq:properties_of_a0_W_16}
\end{equation}
it can easily be checked that
\begin{equation}
\left\Vert W \left[ \frac{1}{z-H_0^w} \right]_{12} \right\Vert_2 \lesssim h^{-3/2} \begin{cases} |r|^{3/4} & \text{for } r \gg 1, \\ \left( \frac{1}{1+|r|} \right)^{3/4} & \text{for } r \ll -1. \end{cases} \label{eq:properties_of_a0_W_17}
\end{equation}
Combining the estimates from Eq.~\eqref{eq:properties_of_a0_W_12}~-~Eq.~\eqref{eq:properties_of_a0_W_17}, we obtain that the Cauchy integral of the first term on the right-hand side of Eq.~\eqref{eq:properties_of_a0_W_11} is bounded by a constant times $h^{5/2}$.

It remains to estimate the contribution coming from the last term on the right-hand side of Eq.~\eqref{eq:properties_of_a0_W_11}. Following the above procedure, we would see that we do not have enough decay in $r$ to be able to assure the convergence of the Cauchy integral. To circumvent this problem, we use the algebraic identity $g(\beta z) = -g(-\beta z) + 1$. We have
\begin{align}
&\left\Vert \frac{1}{2 \pi i} \int_{\mathcal{C}} g(\beta z) (-ih \nabla_j) D \Delta_0 h^2 W D (z-k) h^2 W \left[ \frac{1}{z-H_0^w} \right]_{22} \text{d}z \right\Vert_2 \label{eq:properties_of_a0_W_21} \\
&\hspace{2.5cm} \leq \left\Vert \frac{1}{2 \pi i} \int_{\mathcal{C}} g(-\beta z) (-ih\nabla_j) D \Delta_0 h^2 W D (z-k) h^2 W \left[ \frac{1}{z-H_0^w} \right]_{22} \text{d}z \right\Vert_2 \nonumber \\
&\hspace{3cm} + \left\Vert \frac{1}{2 \pi i} \int_{\mathcal{C}} (-ih\nabla_j) D \Delta_0 h^2 W D (z-k) h^2 W \left[ \frac{1}{z-H_0^w} \right]_{22} \text{d}z \right\Vert_2. \nonumber
\end{align}
The first term on the right-hand side of Eq.~\eqref{eq:properties_of_a0_W_21} can be estimated as before. The integrand of the second term on the right-hand side of the same equation equals zero as we will see now. Let $\psi, \phi \in L^2(\mathbb{R}^3)$ be such that $\hat{\psi} \in \mathcal{C}_c^{\infty}(\mathbb{R}^3)$ and denote $\phi_z = [(z-H_0^w)^{-1}]_{22} \phi$. The term under consideration in the inner product with $\psi$ and $\phi$ reads
\begin{equation}
f(z) = \left( \psi, (-ih\nabla_j) D(z) \Delta_0 h^2 W D(z) (z-k) h^2 \tilde{W} \phi_z \right). \label{eq:properties_of_a0_W_25}
\end{equation}
It defines an analytic function on $\mathbb{C} \setminus \mathbb{R}$ and has to be integrated over the contour $\mathcal{C}$. By continuity, we can replace $W$ by another function $\tilde{W}$ with compact support without disturbing the argument to come. Using the fact that $\hat{\psi}$ and $\tilde{W}$ have compact support, one can easily justify the estimate
\begin{equation}
\left| f(z) \right| \lesssim \frac{1}{1+|y|} \frac{1}{1+|z|^3}, \label{eq:properties_of_a0_W_26}
\end{equation}
where $z=x+iy$ and $y \neq 0$. Let us consider the part of the integral over $\mathcal{C}$ that lies in the upper half-plane. Standard arguments from complex analysis and Eq.~\eqref{eq:properties_of_a0_W_26} show that
\begin{equation}
\int_{-\infty}^{\infty} f(x+iy) \text{d}x = \int_{-\infty}^{\infty} f(x+iy') \text{d}x \label{eq:properties_of_a0_W_27}
\end{equation}
holds for all $y,y'>0$. On the other hand,
\begin{equation}
\left| \int_{-\infty}^{\infty} f(x+iy) \text{d}x \right| \lesssim \frac{1}{1+|y|}  \label{eq:properties_of_a0_W_28}
\end{equation}
together with Eq.~\eqref{eq:properties_of_a0_W_27} implies that the absolute value of the integral $\int_{-\infty}^{\infty} f(x+iy) \text{d}x$ is smaller than any given positive number. The same argument can be done for the integral in the lower half-plane. This shows that the second term on the right-hand side of Eq.~\eqref{eq:properties_of_a0_W_21} equals zero and ends the proof of Lemma~\ref{lem:properties_of_alpha0_W}.  
\end{proof}
\begin{lemma}
Let $V \in L^2(\mathbb{R}^3)$, $\hat{V} \in L^{1}(\mathbb{R}^3) \cap H^2(\mathbb{R}^3) \cap W^{2,\infty}(\mathbb{R}^3)$ as well as $(1+x^2) W \in L^2(\mathbb{R}^3) \cap L^{\infty}(\mathbb{R}^3)$ and $\hat{W} \in L^1(\mathbb{R}^3)$. Then 
\begin{equation}
\left\Vert \hat{\alpha}_0^w(p,q) - \hat{\alpha}_0\left(h(p-q)\right) \right\Vert_{H^2(\mathbb{R}^6)} \lesssim h^{3/2} \label{eq:properties_of_alpha0_W_28}
\end{equation}
\label{lem:properties_of_alpha0_W_b}
holds.
\end{lemma}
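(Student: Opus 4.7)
The argument follows the same blueprint as the proof of Lemma~\ref{lem:properties_of_alpha0_W}, with the difference that we have to control two additional derivatives in the Fourier variables $(p,q)$. I would again represent
$$
\alpha_0^w=\frac{1}{2\pi i}\int_{\mathcal{C}} g(\beta z)\left[\frac{1}{z-H_0^w}\right]_{12}\, \dd z
$$
and use the Neumann-type identity Eq.~\eqref{eq:properties_of_a0_W_5}. The contribution from $(z-H_0)^{-1}$ reproduces the translation-invariant kernel $\hat\alpha_0(h(p-q))$, so the difference consists of (i) the single-$\omega$ contribution with explicit kernel Eq.~\eqref{eq:properties_of_a0_W_8}, and (ii) the double-$\omega$ contribution analyzed in Eq.~\eqref{eq:properties_of_a0_W_11}.

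For (i), I would bound all $(p,q)$-derivatives up to order two of the explicit product in Eq.~\eqref{eq:properties_of_a0_W_8} in $L^2((\mathbb{R}^3)^2)$. Each derivative hits either the bracketed factor involving $g_0(\beta E(h\cdot))$, the factor $\hat W(p-q)$, or one of $\hat\Delta_0(hp),k(hp),E(hp)$ (and the corresponding functions of $q$). Derivatives on the $g_0$-bracket remain bounded by Taylor-expanding $g_0$ near coincident points and using the regularity of $E$; derivatives on $\hat W(p-q)$ are absorbed into $\hat W\in W^{2,\infty}(\mathbb{R}^3)\cap L^1(\mathbb{R}^3)$ from Assumption~\ref{assumption1}; derivatives on $\hat\Delta_0(hp)$ produce factors of $h$ together with norms of $\hat\alpha_0$ that Lemma~\ref{lem:properties_of_alpha_0_1} (combined with \cite[Proposition~5.6]{Frank_Lemm}) controls by $O(h)$; derivatives on $k(hp)$ produce factors of $hp$ that are absorbed by the denominator $E(hp)+E(hq)$. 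After distributing the two derivatives in all possible ways and using Young's/Schwarz's inequality on the $\hat W$-convolution structure, every such term carries at most the factor $h^{3/2}$ inherited from the original $H^1$-estimate plus harmless additional powers of $h$; this yields the required $h^{3/2}$-bound for (i). The assumption $(1+(\cdot)^2)\hat W\in L^\infty(\mathbb{R}^3)$ is precisely what is needed when both derivatives fall on factors other than $\hat W$ and one must spread the $p$-$q$ momentum separation through the potential.

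For (ii), I would follow the four-term decomposition of $[\cdots]_{12}$ in Eq.~\eqref{eq:properties_of_a0_W_11} and use the resolvent estimates Eq.~\eqref{eq:properties_of_a0_W_12}--\eqref{eq:properties_of_a0_W_17}, upgraded by one extra derivative. The stronger integrability $\hat W\in W^{4,1}\cap W^{4,\infty}$ and $\hat V\in H^4$ are precisely tailored so that all such higher-derivative versions remain finite, and the bounds on $W(z-H_0^w)^{-1}$ and its derivatives still decay in $|r|$ fast enough along the contour. The two marginal terms for which the raw estimate does not provide enough decay to ensure Cauchy-integrability are treated via the identity $g(\beta z)=1-g(-\beta z)$ exactly as in Eq.~\eqref{eq:properties_of_a0_W_21}, and the residual contour integral vanishes by the same contour-deformation argument built on Eq.~\eqref{eq:properties_of_a0_W_26}--\eqref{eq:properties_of_a0_W_28}; taking two extra derivatives does not affect the vanishing because $\hat\psi,\tilde W$ still have compact support. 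The additional factor $h^2$ from the second $\omega$ gives (ii) the improved order $h^{5/2}$.

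The main obstacle I expect is the bookkeeping of the many derivative configurations in step (i), together with making the contour argument of step (ii) robust under two extra derivatives. In particular, the worst sub-case in (i) arises when one derivative lands on the $g_0$-bracket and another on $\hat\Delta_0$, which produces an integrand whose decay in $(p,q)$ must be supplied jointly by the pointwise bound $(1+|p-q|^2)|\hat W(p-q)|\lesssim 1$ and by the $H^4$-regularity of $\hat\alpha_0$; without both of these ingredients the $L^2((\mathbb{R}^3)^2)$-norm would fail to be finite. Once these bounds are in place the argument closes in the same way as for Lemma~\ref{lem:properties_of_alpha0_W}.
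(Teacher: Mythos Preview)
Your plan is correct and will work, but it is organized dually to the paper's. The paper does not differentiate the Fourier kernel in $(p,q)$; instead it observes that controlling the $H^2$-norm in the Fourier variables is the same as controlling multiplication by $(1+x^2+y^2)$ in position space, and then commutes the multiplication operator $x^2$ (and $y^2$) through the resolvent factors in Eq.~\eqref{eq:properties_of_a0_W_5} until it stands next to $W$, at which point the hypothesis $(1+x^2)W\in L^2\cap L^\infty$ is invoked directly. This is the exact analogue of Lemma~\ref{lem:properties_of_alpha0_W}, where $(-ih\nabla_j)$ was commuted until it stood next to $\Delta_0$. The commutators $[x^2,D]$ and $[x^2,\hat\Delta_0(-ih\nabla)]$ produce, in Fourier space, precisely the derivative terms you enumerate (controlled via $\hat V\in H^2\cap W^{2,\infty}$), so the two routes yield the same estimates; the position-space commutation simply packages the bookkeeping more transparently and explains why the hypothesis is stated as a weighted bound on $W$ in position space rather than as smoothness of $\hat W$. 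One small correction to your step (i): what absorbs the derivatives landing on $\hat W(p-q)$ in the $L^2(\mathbb{R}^6)$ estimate is $\hat W\in H^2(\mathbb{R}^3)$, i.e.\ $(1+x^2)W\in L^2(\mathbb{R}^3)$ --- the lemma's own hypothesis --- and not $\hat W\in W^{2,\infty}$ from Assumption~\ref{assumption1}.
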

\begin{proof}
The proof goes along the same lines as the one of Lemma~\ref{lem:properties_of_alpha0_W}. The only difference is that we now have to commute $x^2$ until it stands next to $W$, while in the previous proof we commuted $(-ih \nabla_j)$ until it stood next to $\Delta_0$. Since no additional difficulties arise, we leave the proof to the reader.
\end{proof}
\end{appendix}
\textbf{Acknowledgements.} The author is grateful to Christian Hainzl for proposing the problem investigated in this article as part of his doctoral studies. In addition to that, it is a pleasure to thank Christian Hainzl and Robert Seiringer for many valuable discussions and Tim Tzaneteas for proofreading an earlier version of this document. Financial support by the DFG through the Graduiertenkolleg 1838 and by the European Research Council (ERC) under the European Union's Horizon 2020 research and innovation programme (grant agreement No 694227) is gratefully acknowledged.

\vspace{0.5cm}

(Andreas Deuchert) Institute of Science and Technology Austria (IST Austria)\\ Am Campus 1, 3400 Klosterneuburg, Austria\\ E-mail address: \texttt{andreas.deuchert@ist.ac.at}

\end{document}